\documentclass[]{interact}

\usepackage{epstopdf}
\usepackage{subfigure}
\usepackage[british]{babel}
\usepackage{natbib}
\bibpunct[, ]{(}{)}{;}{a}{}{,}
\theoremstyle{plain}
\newtheorem{theorem}{Theorem}[section]
\newtheorem{lemma}[theorem]{Lemma}

\theoremstyle{definition}
\newtheorem{definition}[theorem]{Definition}

\theoremstyle{remark}
\newtheorem{remark}{Remark}

\usepackage{verbatim}
\usepackage{amsmath,amssymb,calc}
\usepackage{graphicx,array}
\usepackage{enumerate}
\usepackage{algorithm}
\usepackage{color}
\usepackage{algpseudocode}
\usepackage{tikz}
\usetikzlibrary{decorations.pathreplacing}
\newcommand{\beq}{\begin{equation}}
\newcommand{\eq}{\end{equation}}
\newcommand{\E}{\mathbb{E}}

\renewcommand{\P}{\mathbb{P}}
\newtheorem{proofpart}{Segment}

\makeatletter
\@addtoreset{proofpart}{theorem}
\makeatother

\DeclareMathOperator*{\argmin}{arg\,min}
\title{Platoon Forming Algorithms for Intelligent Street Intersections}

\author{
	\name{R.W. Timmerman\textsuperscript{a}\thanks{Contact R.W. Timmerman, email: r.w.timmerman@tue.nl} and M.A.A. Boon\textsuperscript{a}}
	\affil{\textsuperscript{a}Eindhoven University of Technology, Eindhoven, The Netherlands}
}

\begin{document}

	\maketitle
	
	\begin{abstract}
We study intersection access control for autonomous vehicles. Platoon forming algorithms, which aim to organize individual vehicles in platoons, are very promising. To create those platoons, we slow down vehicles before the actual arrival at the intersection in such a way that each vehicle can traverse the intersection at high speed. This increases the capacity of the intersection significantly, offering huge potential savings with respect to travel time compared to nowadays traffic.

We propose several new platoon forming algorithms and provide an approximate mean delay analysis for our algorithms. A comparison between the current day practice at intersections (through a case study in SUMO) and our proposed algorithms is provided. Simulation results for fairness are obtained as well, showing that platoon forming algorithms with a low mean delay sometimes are relatively unfair, indicating a potential need for balancing mean delay and fairness.
	\end{abstract}

 	\begin{keywords}
		Platoon forming algorithms, speed control algorithms, autonomous vehicles, queueing theory, polling models.
	\end{keywords}
	
	\section{Introduction}\label{h:intro}

Congestion is commonplace at intersections in urban traffic, but intersections are inevitable to divide capacity among vehicles from conflicting flows. To do so in a fair and efficient manner, intersections are typically managed by some kind of switching process that alternatingly gives access to batches of vehicles, imposing a constraint on the maximal batch size that can pass the intersection.

The traditional way of regulating the switching process is by installing traffic lights with static signalling, using timers, see e.g.~\cite{darroch} and~\cite{fctlsolo}, or dynamic signalling with sensor data of currently existing traffic flows, see e.g.~\cite{papageorgiou2003review}. Anticipating the emergence of self-driving vehicles, efficient and fair algorithms for intersection access should be designed. Platoon Forming Algorithms (PFAs) provide such alternatives for self-driving vehicles, no longer letting the traffic lights dictate the switching process and hence batch forming, but letting the vehicles organize themselves in batches, well in advance of arriving at the intersection as in~\cite{miculescu2014polling,miculescu2016polling} and~\cite{tachet2016revisiting}. In this way platoons of vehicles are formed that can pass the intersection collectively.

There is a natural tension between capacity and fairness. One of the fairest switching rules is to let vehicles pass the intersection in order of arrival (on an intersection wide basis). This rapidly becomes unsustainable, because each switch requires an additional clearance time, which decreases the capacity of the intersection. In near-saturation conditions, when the flows together impose a high volume-to-capacity ratio, the loss of capacity due to switching will have a dramatic effect on delays. Our PFAs aim to balance capacity and fairness.

In PFAs, vehicles arriving at the intersection arrange themselves in platoons, not adapting their relative position to other vehicles on the same lane but adapting their speed. The key feature is that cars, while approaching the intersection, adjust their speeds and upon arrival at the intersection are at high speed, accessing the conflict area of the intersection for a minimum period of time. In this way, time bans to give way to other traffic flows still exist, but the platoons are processed in the quickest possible way, because the size and speed of the platoons, of all directions, are organized by the PFA.

PFAs are one particular example of the `slower is faster' effect, which is also observed in e.g.~\cite{helbing2000simulating} and~\cite{helbing2009operation}, where, perhaps counter-intuitively, slowing down early results in less delay on average in the future. Moreover, this phenomenon results in environmental advantages as less braking-and-pulling-up-again is needed and cars reach their destination more quickly.

The importance of intersection access algorithms has been recognized for several years. Examples of PFAs can be found in~\cite{tachet2016revisiting}, which introduces a batch formation algorithm based on arrival times of vehicles and a maximum batch size, and in~\cite{miculescu2014polling,miculescu2016polling}, which use an approach based on queueing models, and to be more specific \emph{polling models}. Queueing theory has played an important role in the modeling and performance analysis of signalized and unsignalized intersections since the early sixties, see for example the seminal papers by \citet{newell65} and \cite{tanner62}. Polling models are queueing systems where one server visits multiple queues, generally in a cyclic order. The multidimensional analysis of polling models is able to capture the intrinsic interactions between the processes taking place at the different queues. Polling models have a long tradition in communication networks, but \cite{miculescu2014polling} have shown that their applicability can be extended to intersections of autonomous vehicles. One of the key questions in polling models is how to decide which queue should be served (and how many customers before advancing to the next queue). This is exactly one of the main topics of this paper, where we develop algorithms that determine how to construct platoons of autonomous vehicles and when to give each platoon access to the intersection.
A speed profile algorithm provides the key link between the PFAs and polling models, which we will show in more detail later.
In the existing literature, many more models and control techniques are investigated like reservation based control algorithms \citep{dresner2008multiagent} and controls based on fuzzy logic \citep{milanes2010controller}. In a recent paper, \cite{liuwanghoogendoorn2019} present a method for fixed-cycle plans, where the PFA and the optimization of the trajectories are integrated. For an overview we refer to~\cite{rios2017survey}.

The area of application of PFAs is not restricted to intersections. There are numerous cases where PFAs could be used to achieve a good performance. An example in traffic would be the merging of different streams of vehicles (discussed in e.g.~\citealt{rios2017survey}). Another possible application can be found in automated guided vehicles (AGVs) systems, where AGVs cross each other or have to merge, see e.g.~\cite{kockelkoren2018}. In~\cite{kockelkoren2018}, ideas stemming from speed profile algorithms are used and so PFAs can be used in similar types of AGV systems.

\subsection*{Main Contributions}

Our first contribution is the introduction of several new PFAs, based on enhanced polling policies, that perform well regarding mean delay, unifying and extending ideas from~\cite{miculescu2016polling} and~\cite{tachet2016revisiting}.

Our second contribution is the introduction of a new class of closed-form speed profile algorithms (SPAs). SPAs ensure an efficient use of the intersection, by optimizing the trajectory of (platoons of) vehicles driving towards the intersection, ensuring the arrival at their designated times.
\cite{miculescu2014polling} introduced the MotionSynthesize procedure, a linear optimization program to find these trajectories. The MotionSynthesize algorithm computes the optimal trajectory for an autonomous vehicle, given the trajectory of its predecessor and the crossing time computed by the PFA. We have developed an alternative to this procedure exhibiting desirable properties and we have found closed-form solutions for the MotionSynthesize procedure and for this alternative, alleviating the need for linear optimization solvers.

Using such speed profile algorithms, a link between polling models and PFAs is established, making it possible to conduct a performance analysis on e.g. mean delay, which is the main performance characteristic considered in the literature for algorithms like PFAs. Using interpolation techniques from \cite{boon2011closed} we develop accurate approximations for the mean delays.

Another contribution is the introduction of the notion of \emph{fairness} of a PFA. Fairness in queueing models (and therefore PFAs) is important in the perception of customers, see e.g.~\cite{rafaeli2002effects}. We use the quantification of fairness as defined in~\cite{shapira2015fairness} for polling models, to assess the fairness of the various PFAs.


Furthermore, we provide a comparison between the performance of traditional traffic technologies and PFAs through simulations in SUMO and show that intersections in the future can be used much more efficiently, reducing congestion.

\subsection*{Organization of the paper}

This paper is organized in the following way. We start with a description of the various ingredients of the model and provide an extensive description of the new PFAs that we introduce in Section~\ref{h:model}. Section~\ref{h:arrivalAtI} is devoted to SPAs. Afterwards, we introduce polling models and give the analytical results that we need for the analysis of mean delay and fairness of PFAs in Section~\ref{h:analysis}. Subsequently, Section~\ref{h:comparison} provides a comparison between the traditional traffic light (represented by simulations in SUMO) and our PFAs, focusing on mean delay, and we wrap up with conclusions and discussions in Section~\ref{h:conclusion}.

\section{Model Formulation}\label{h:model}

We will consider models in which autonomous vehicles are crossing an intersection. We assume the existence of a control region around the intersection with at the center a centralized controller communicating with all vehicles within the control region. In fact, this control region can be divided into two sub-regions: the inner part is called the ``SPA control region''. As soon as a vehicle enters this part of the control region, its trajectory is determined by the speed profiling algorithm. In the outer part, which we call the ``PFA control region'', the access times of each of the arriving vehicles to the intersection are determined. In this PFA control region, the central controller creates platoons of vehicles by scheduling the crossing times of the vehicles according to some policy (the PFA) in such a way that every vehicle is able to cross the intersection at its designated time. We assume that we can control the speed of a vehicle and do so in such a way that the intersection is used efficiently. We make sure that vehicles drive at maximum speed at the moment they are starting to cross the intersection, using ideas introduced in~\cite{miculescu2014polling}. Instead of stopping at the stop line and still having to accelerate when crossing the intersection, a vehicle is already slowed down before it reaches the intersection and starts accelerating again, such that it is driving at full speed when reaching the conflict area of the intersection. This amongst others implies that the time to cross the intersection is the same for each vehicle. The last assumption discussed here, is that we assume that the central controller can look `ahead' for the same amount of time for each of the lanes, to ease the notation and algorithms. The reason why we need separate control regions for the PFA and the SPA is that we need the trajectory to be fixed once a vehicle enters the SPA control region. Inside the PFA control region, vehicle access times may be adjusted due to the arrival of other vehicles. 

We clarify how this works in a simple example, depicted in Figure \ref{fig:intersection}. For simplicity, we show vehicles arriving from only two different approaches (marked red and blue). The central controller uses a PFA to compute the access times to (the conflict area of) the intersection for each vehicle entering the control region. The intersection drawn in Figures \ref{fig:intersection}(a) and (b) only depicts the inner (SPA) part of the control region. Figure \ref{fig:intersection}(c) shows the corresponding trajectories. Note that all vehicles drive at full speed in the PFA control area (from 75 -- 50 metres distance) and start their trajectories controlled by the SPA at 50 metres distance. 
The blue vehicle entering the SPA control region at time $t=0$ encounters no hinder from other vehicles and proceeds at full speed, without delay. The first red vehicle was originally scheduled to arrive at the intersection directly after the first blue vehicle. When, however, the second blue vehicle entered the PFA control region at $t=1$ (probably arriving in a platoon from an upstream intersection), this blue vehicle is allowed to join the platoon started by the previous blue vehicle. This means that the first red vehicle is rescheduled, being delayed for two seconds. This means that it gets access to the intersection \emph{after} the second blue vehicle, at a safe distance. Due to this two second delay, the next two red vehicles are able (and allowed) to join the red platoon. The actual trajectories towards the intersection are determined by an SPA, which ensures an efficient usage of the intersection. Note that all vehicles cross the intersection at full speed. 


\begin{figure}[!ht]
\begin{tabular}{cc}
\includegraphics[width=0.4\linewidth]{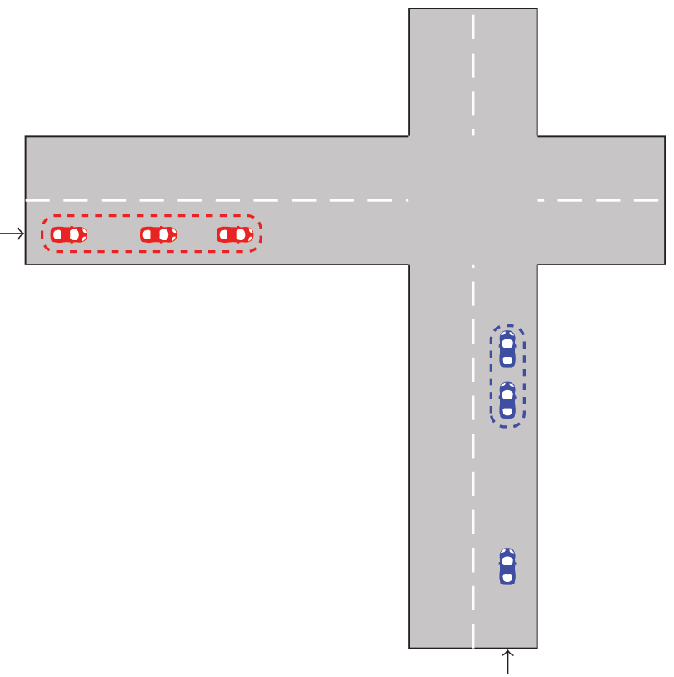}
&
\includegraphics[width=0.4\linewidth]{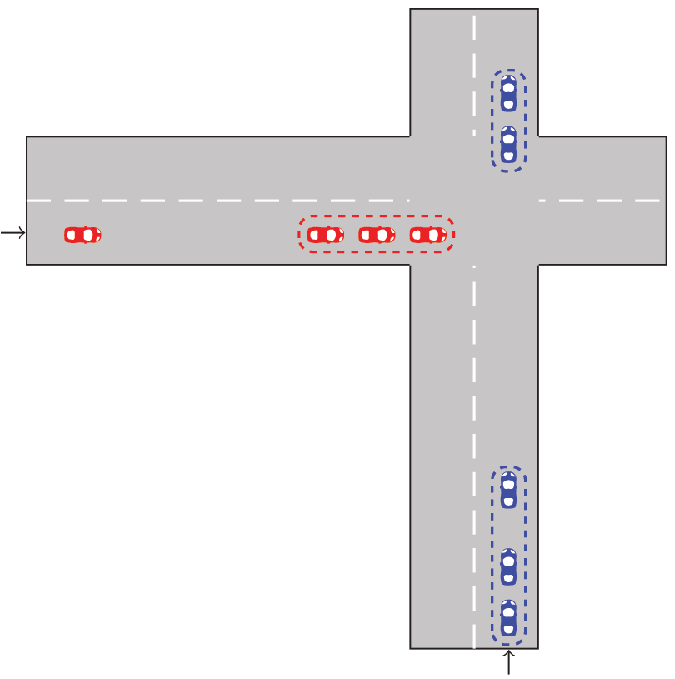}
\\
\scriptsize (a) 
Intersection at time $t=4$.
&
\scriptsize (b) 
Intersection at time $t=8$.
\end{tabular}
\centering
\includegraphics[width=0.8\textwidth]{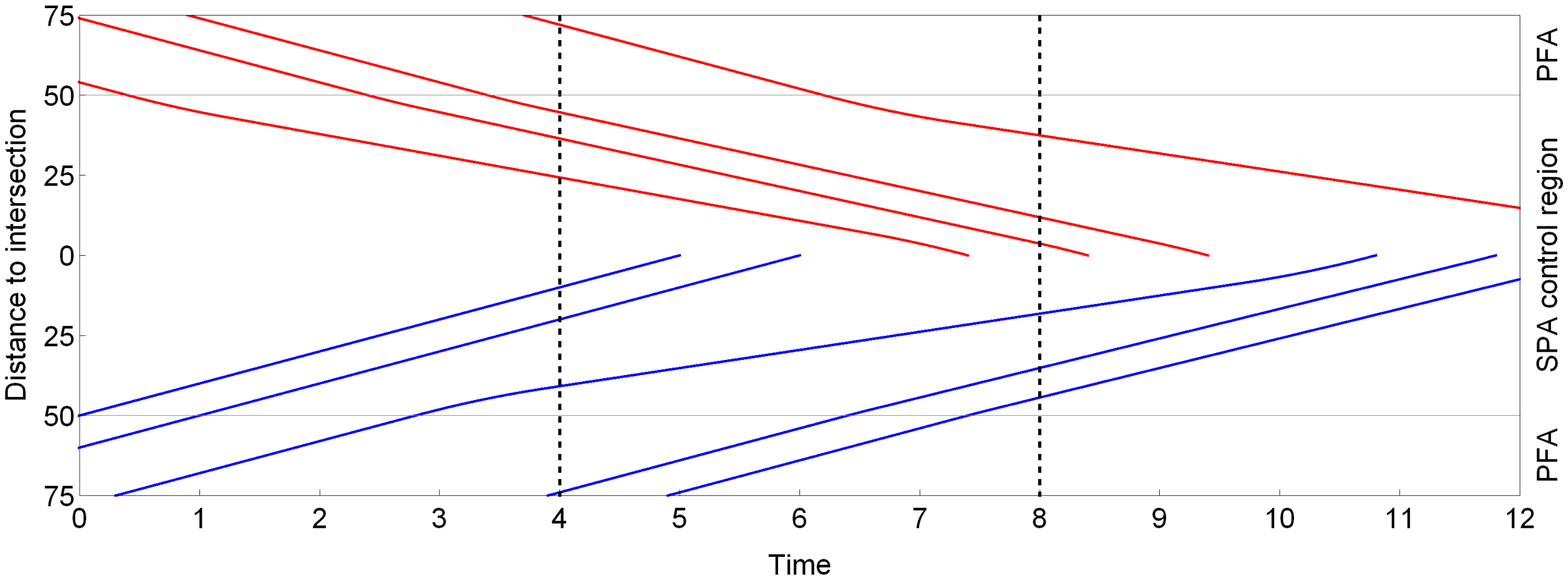}\\
\scriptsize (c) Trajectories for the red and blue traffic flows.
\caption{A schematic representation of the model discussed in this paper. The platoon forming algorithms in this paper determine how the platoons are constructed. In the next step, a speed profiling algorithm determines how each individual vehicle approaches the intersection. Fig (a) and (b) correspond, respectively, to the situation in (c) at times $t=4$ and $t=8$ seconds.}
\label{fig:intersection}
\end{figure}

An advantage of the control region, besides the ability to control the speed of arriving vehicles, is that we can adjust the scheduling of the vehicles based on the arriving vehicles that are not yet at the intersection. This specific anticipation is key to the forming of platoons and is up to the central controller at the intersection and results in a specific PFA. There are many PFAs, yet we will specifically focus on PFAs that find their origin in polling models, because they are efficient, well understood and have proven their value in other application areas, such as communication systems and production lines.

\subsection{Platoon Forming Algorithms}\label{h:pfas}

We present our new PFAs as standalone algorithms, based on service disciplines for polling models, which are described in a way fit for PFAs. We also briefly discuss the Batch Algorithm, originating from~\cite{tachet2016revisiting}, which serves as a benchmark for our PFAs.
The PFAs we discuss, are all derived from so-called branching-type disciplines, which find their origin in the polling literature, see e.g.~\cite{resing1993polling}. Branching-type service disciplines include the exhaustive and the gated discipline, which all allow for many analytical results.

Before we start with the descriptions of the PFAs we introduce some concepts and notation. The PFA determines the crossing time of each of the vehicles in the control region that have not yet crossed the intersection. We represent this schedule by an entity we call `vehicles'. A vehicle $V$ has three properties: a lane $d_{V}$, an earliest crossing time $a_{V}$ and the currently scheduled crossing time $c_{V}$. We assume that at every point in time we have such a list of vehicles, ordered on basis of the $c_{V}$'s. The PFA updates (part of) the crossing time of the vehicles upon arrival and departure epochs of vehicles in the control region. The latter is dealt with in an easy way: if the current time is $c_V+B$, where $B$ denotes the difference in crossing times between two vehicles on the same lane, then vehicle $V$ is removed from the ordering. Turning towards arrivals of vehicles within the control region, we need to consider the crossing times of all vehicles already scheduled in order to schedule $V$. There are several ways to schedule those vehicles and the first we discuss is the \emph{exhaustive discipline}, as described in Algorithm~\ref{alg:exhaustive}.
An intuitive explanation of the exhaustive discipline is the following: if a vehicle that arrives in the control region is able to get within $B$ seconds of the vehicle in front of it on the same lane (which might occur if the vehicle is delayed by its predecessor), it is allowed to join the same platoon as its predecessor. This would imply that all vehicles on different lanes have to wait an additional $B$ seconds. If a vehicle cannot join the platoon in front of it, it will form a new platoon. If no vehicle (on the current lane) is able to join the platoon currently crossing the intersection, the next platoon of vehicles at the next lane may cross the intersection. As a result we have a cyclic structure of departures of platoons.

This discipline is known for its low mean delay, which is the main reason to consider this discipline. We also introduce one more constant, $S$, that represents the time between the start of crossing of two vehicles on different lanes (similar to clearance times at intersections nowadays).

\begin{algorithm}
	\caption{exhaustive algorithm. }\label{alg:exhaustive}
	\begin{algorithmic}[1]
		\State Input: current ordering of vehicles, denoted $(V_{1},V_{2},...,V_{K})$, ordered on basis of $c_{V}$; $V_{last}$, defined as $V_K$ or the last vehicle that crossed the intersection if the ordering is empty; and a to be scheduled vehicle $V_0$ with earliest arrival time at the intersection $a_{V_0}$ in lane $d_{V_0}$.
	\If {$c_{V_{last}}+B<a_{V_0}$} \Comment{\small $V_0$ is scheduled last}
	\If {$d_{V_0}=d_{V_{last}}$}
	\State Put $c_{V_0}\gets a_{V_0}$. \Comment{\small $V_0$ proceeds without delay}
	\Else
	\State Put $c_{V_0}\gets \max\{a_{V_0},c_{V_{last}}+S\}$.
\Comment{\small Check if additional clearance time is needed}
	\EndIf
	\Else
	\State Put $t_i\gets\begin{cases}
c_{L_i} & \text{ where $L_i$ is last scheduled vehicle in lane $i$},\\
-\infty & \text{ if lane $i$ is empty and no such vehicle exists.}
\end{cases}$
	\If {$t_{d_{V_0}}+B>a_{V_0}$} \Comment{\small $V_0$ is able to join a platoon}
	\State Put $c_{V_0}\gets t_{d_{V_0}}+B$.
	\For {each vehicle $V$ in the ordering with $c_V>t_{d_{V_0}}$ }
        \State Put $c_V\gets c_V+B$. \Comment{\small Delay other vehicles}
        \EndFor
	\Else
	\For { $l$ in $(d_{V_0}-1,d_{V_0}-2,...,1,n,n-1,...,d_{V_0}+1)$}
	\If{$t_{l}+S>a_{V_0}$}\Comment{\small $V_0$ starts new platoon after last platoon in lane $l$}
	\State Put $c_{V_0}\gets t_{l}+S$.
	\For {each vehicle $V$ in the ordering with $c_V>t_{l}$}
        \State Put $c_V\gets c_V+S$.  \Comment{\small Delay other vehicles}
    \EndFor
	\State \textbf{break}
	\EndIf
	\EndFor
	\EndIf
	\EndIf
	\State Add vehicle $V_{0}$ to the ordering.
	\State Output: the new ordering $(V_1,V_2,...,V_0,...,V_K)$
	\end{algorithmic}
\end{algorithm}

Although the exhaustive PFA will have very good delay characteristics, we will consider the \emph{gated PFA} (discussed below) as well.
The intuitive explanation of the gated algorithm is quite close to that of the exhaustive discipline, with one exception. It is not always allowed to join a platoon, even if a vehicle is able to get within $B$ seconds from its predecessor on the same lane. As described in more detail below, platoons are finalized at an earlier moment than with exhaustive service. This moment of finalizing a platoon is, in the polling literature, compared to putting a gate behind the last customer (corresponding to the last vehicle in the platoon). Newly arriving customers will have to wait (behind the virtual gate) for the next server visit, which corresponds to the formation of a new platoon in our setting.

An advantage of the gated discipline is that there is less variation in the size of platoons and, hence, cycle lengths are less variable as well. It may result in longer delays though, as we will see in the numerical examples in Section~\ref{h:analysis}.

For the implementation of the gated PFA, we need to keep track of a couple of additional variables for each lane. In this gated discipline we are namely `putting gates' which can be seen as `fixing the vehicles of a platoon', meaning that future arrivals in the same lane cannot join the currently formed platoon (i.e. they are `behind the gate').  We define two additional, ordered sets for each lane $f_i$ and $t_i$ representing the set of start times of platoons on lane $i$, respectively the end of platoons at lane $i$ (so the start of service of the last vehicle).
Joining a platoon is only allowed if the lane is \emph{not} the lane from which vehicles are currently departing (the platoon is not yet fixed). If a car in lane $i$ is able to reach the intersection (without any other interfering traffic) before one of times in $f_i$, then that car is allowed to join that platoon (so the platoon is enlarged). If such a car is not able to reach the intersection before one of the times in $f_i$, then it creates a new platoon. In general, departures of vehicles are dealt with in the same way as in the exhaustive discipline. We again have the cyclic structure as in the exhaustive discipline. The gated algorithm can then be described as in Algorithm~\ref{alg:gated}.

\begin{algorithm}
	\caption{gated algorithm.}\label{alg:gated}
	\begin{algorithmic}[1]
		\State Input: current ordering of vehicles, denoted $(V_{1},V_{2},...,V_{K})$, ordered on basis of $c_{V}$; $V_{last}$, defined as $V_K$ or the last vehicle that crossed the intersection if the ordering is empty; the sets $f_i$ and $t_i$ for $i=1,...,n$ representing the start of platoons and end of platoons at lane $i$; and a to be scheduled vehicle $V_0$ with earliest arrival time at the intersection $a_{V_0}$ in lane $d_{V_0}$.
		\If {$c_{V_{last}}+B<a_{V_0}$}  \Comment{\small $V_0$ is scheduled last}
		\If {$d_{V_0}=d_{V_{last}}$}
		\State Put $c_{V_0}\gets a_{V_0}$.\Comment{\small $V_0$ proceeds without delay}
		\Else
		\State Put $c_{V_0}\gets \max\{a_{V_0},c_{V_{last}}+S\}$.\Comment{\small Check if additional clearance time is needed}
		\EndIf
		\State Add time $c_{V_0}$ to $f_{d_{V_0}}$ and time $c_{V_0}$ to $t_{d_{V_0}}$.\Comment{\small Register $c_{V_0}$ as start of a new platoon}
		\Else
		\If{there is a time in $f_{d_{V_0}}>a_{V_0}$}\Comment{\small $V_0$ is able to join a platoon}
		\State Put $f \gets$ the lowest time in $f_{d_{V_0}}$ such that $f<a_{V_0}$.
        \State Put $t \gets$ the corresponding end of platoon in $t_{d_{V_0}}$.
		\State Put $c_{V_0}\gets t+B$.
		\For {each value $t^*$ in $t_1,...,t_n$ with $t^*> t$}\Comment{\small Update sets $t$ and $f$}
        \State Put $t^*\gets t^*+B$
        \State Put the corresponding start of platoon $f^*\gets f^*+B$.
        \EndFor		
        \For {each vehicle $V$ in the ordering with $c_V>c_{V_0}$}
            \State Put $c_V\gets c_V+B$.\Comment{\small Delay other vehicles}
        \EndFor
		\Else
		\For {$l$ in $(d_{V_0}-1,d_{V_0}-2,...,1,n,n-1,...,d_{V_0}+1)$}
		\If{there is a time in $t_{l}+S>a_{V_0}$}
		\State Find the lowest time $t$ in $t_{l}$ such that $t+S>a_{V_0}$.\Comment{\small $V$ forms a new platoon}
		\State Put $c_{V_0}\gets t+S$.
		\If{there is a time in $f_{l}$ such that $t=f_{l}$}
		\For {each value $t^*$ in $t_1,...,t_n$ with $t^*> t+S$}\Comment{\small Update sets $t$ and $f$}
            \State Put $t^*\gets t^*+2S$
            \State Put the corresponding start of platoon $f^*\gets f^*+2S$.
        \EndFor
		\For {each vehicle $V$ in the ordering with $c_V>t+S$}
            \State Put $c_V\gets c_V+2S$.\Comment{\small Delay other vehicles}
        \EndFor
		\Else \Comment{\small $V_0$ is able to join a platoon}
		\For {each value $t^*$ in $t_1,...,t_n$ with $t^*> t+S$} \Comment{\small Update sets $t$ and $f$}
            \State Put $t^*\gets t^*+S$
            \State Put the corresponding start of platoon $f^*\gets f^*+S$.
        \EndFor
		\For {each vehicle $V$ in the ordering with $c_V>t+S$}
            \State Put $c_V\gets c_V+S$.\Comment{\small Delay other vehicles}
        \EndFor
		\EndIf
		\State Add time $c_{V_0}$ to $f_{d_{V_0}}$ and $c_{V_0}$ to $t_{d_{V_0}}$.
		\State \textbf{break}
		\EndIf
		\EndFor
		\EndIf
		\EndIf
		\If {$C_{V_0}$ is undefined}
		\State Put $c_{V_0}\gets c_{V_K}+B$.
		\State Add time $c_{V_0}$ to $f_{d_{V_0}}$ and time $c_{V_0}$ to $t_{d_{V_0}}$.
		\EndIf
		\State Add vehicle $V_{0}$ to the ordering.
		\State Output: the new ordering $(V_1,V_2,...,V_0,...,V_K)$
	\end{algorithmic}
\end{algorithm}

As a reference to algorithms so far established in the literature, we also consider the Batch Algorithm from~\cite{tachet2016revisiting}. For the full description we refer to~\cite[Supplementary~Information,~Section~1.5]{tachet2016revisiting}. The Batch Algorithm has some ingredients of a gated PFA (also in the Batch Algorithm `gates' are put), together with a maximum number of vehicles dealt with in one cycle.

\section{Speed Profile Algorithms} \label{h:arrivalAtI}
Now that we know how to schedule the crossing times of vehicles at the intersection, we turn to the other key ingredient of our model, which is the speed profiling of arriving vehicles. We start with some requirements that the PFAs have to satisfy before we can control the speed of the arriving vehicles in a proper and safe way.
The main condition a PFA has to satisfy is \emph{regularity}.
\begin{definition}[\citealt{miculescu2014polling,miculescu2016polling}] \label{def:regularity}
	A polling policy is \emph{regular} if an arrival in a queue does not change the \emph{order} of service of all currently present vehicles. I.e. the new arrival is inserted somewhere in the order of service of all waiting vehicles.
\end{definition}
A regular polling policy, together with assuming a sufficiently big control region, ensures that the intersection coordination algorithm in~\cite{miculescu2014polling,miculescu2016polling} and the speed profile algorithms that we will introduce are solvable.
These assumptions are necessary with respect to the (possibility of) rescheduling of vehicles. As can be seen in Algorithms~\ref{alg:exhaustive} and ~\ref{alg:gated}, the access time of (some of the) vehicles to the intersection might be increased, upon which trajectories have to be rescheduled. The above assumptions ensure that we can find feasible and safe trajectories for every vehicle, also in case of rescheduling, cf.~\cite{miculescu2014polling,miculescu2016polling}.

Besides these two assumptions on regularity and the size of the control region, we also need to make sure that there are not too many vehicles in the control region at the same time: if there are too many vehicles present in the control region, it might be the case that a newly arriving vehicle cannot decelerate to a complete stop in the distance between entering the control region and the stopping position of its predecessor. This phenomenon is called \emph{overcrowding}, see~\cite{miculescu2016polling}. A way to deal with this issue is proposed as well: we assume that a vehicle that cannot enter the control region safely, does not enter the control region at all.

All PFAs that we discussed are regular in the sense of Definition~\ref{def:regularity}. For the Batch Algorithm of~\cite{tachet2016revisiting} we postulate that this condition is also satisfied.

\subsection{Optimization based Speed Profile Algorithms}

In this subsection, we discuss two algorithms that, satisfying above conditions, result in an efficient use of the intersection which is our main purpose. To this end, we require that vehicles drive at maximum speed while crossing the intersection, so we need to control the speed of arriving vehicles while they are in the control region. A relatively simple optimization algorithm can then be formulated that does the trick, as is shown in~\cite[the MotionSynthesize procedure]{miculescu2014polling,miculescu2016polling}. In order to solve this minimization problem, time is discretized. The MotionSynthesize procedure is then reduced to a linear optimization problem, for which efficient solvers exist.

The optimization procedure has several nice properties, among which is that the algorithm is provably safe. A formal definition of ``safe'' and the required conditions (such as ``no overcrowding'') are given in \cite{miculescu2016polling}, but intuitively it simply means that no collisions will occur in the control region.
Another property is that the distance between vehicle and intersection is minimized across the whole time period a vehicle is in the control region. This is equivalent with the minimization of the area under the distance-time diagram, where the distance is defined as the distance between vehicle and intersection. The physical length of the queue of vehicles is thus also minimized. This is favorable in a network setting, minimizing the amount of spillback to other intersections. Yet, this specific property of minimizing the distance between vehicle and intersection has a high energy consumption and may not be very pleasant for passengers. Below, in Algorithm~\ref{alg:aanrijdenEigen}, we discuss a slightly different formulation of the problem and we minimize the total amount of the absolute value of the \emph{acceleration} instead of the \emph{distance} between vehicle and intersection. Instead of minimizing the area under the distance-time graph, we now minimize the area under the ``absolute value of the acceleration-time'' graph. Note that the scheduled crossing time is set by the PFA. In this section, for consistency with~\cite{miculescu2016polling}, we use the notation $t_f$ to denote scheduled crossing time,  instead of $c_V$. Assuming regularity of the PFA and a sufficiently big control region is not sufficient to ensure a feasible optimization problem as it is for the MotionSynthesize procedure. We formulate a mild additional constraint to guarantee feasibility of the optimization problem, which is that one needs to be sure that when the preceding vehicle is done decelerating, the next vehicle is able to decelerate to that same speed as well before the preceding vehicle is decelerating further (due to rescheduling for example). As will turn out, a vehicle starts decelerating immediately after entering the control region (see e.g. Figure~\ref{f:trajAccExMinL}). As a consequence, if a vehicle is entering the control region, it needs to be sure that it is able to decelerate to the speed of its predecessor while maintaining a certain distance to its predecessor at the same time, showing that we need the additional assumption.

Before we turn to the algorithm, we introduce some notation. Each vehicle has a trajectory that is computed along the lines of the algorithm, given the current time, $t_{0}$, and the scheduled crossing time $t_{f}$. The algorithm will compute $x(t)$, the place of the vehicle at time $t$, for $t_{0}\leq t\leq t_{f}$, the speed $v(t)$ at time $t$ and the acceleration $a(t)$ at time $t$. Furthermore, $y(t)$ denotes the trajectory of the predecessor (if any) for $t_{0,y}\leq t\leq t_{f,y}$; $t_{f,y}$ denotes the final crossing time of the predecessor of the vehicle we are currently planning; $l$ denotes the minimal distance between the front part of two successive vehicles; $a_m$ denotes the maximum acceleration; $-a_m$ denotes the maximum deceleration; and $v_m$ denotes the maximum speed. The initial conditions, i.e. the location and speed at the start of the trajectory of the vehicle, are given by $x(t_{0})=x_0$ and $v(t_{0})=v_0$. To put the location $x(t)$ into perspective, we measure $x(t)$ as the (negative) distance between the vehicle and the start of the conflict area of the intersection, i.e. $x(t_0)=x_0=-X$ and $x(t_f)=0$, when the vehicle enters the control region at a distance $X$ from the intersection.
Algorithm~\ref{alg:aanrijdenEigen} can be discretized in order to obtain a linear optimization problem, just as the MotionSynthesize procedure.

\begin{algorithm}
\caption{MotionSynthesize procedure with a minimal acceleration}\label{alg:aanrijdenEigen}
	\begin{algorithmic}[1]
		\State Input: $x_0$, $v_{0}$, $t_{0}$,
                $t_{f}$,$t_{f,y}$, $y$.
		\State Compute \begin{align}
		\textrm{MotionSynthesizeAcc}(x_0,v_0,t_{0},t_{f},t_{f,y},y) &:= \underset{x:[t_{0},t_{f}]\to \mathbb{R}}\argmin \int_{t_{0}}^{t_{f}} |a(t)|dt\nonumber \\\nonumber
		& \textrm{subject to } \\\nonumber
		&x^{\prime\prime}(t)=a(t) \text{, for all } t\in[t_{0},t_{f}];\\\nonumber
		&0\leq x^\prime(t)\leq v_m \text{, for all } t\in[t_{0},t_{f}];\\\nonumber
		&|a(t)| \leq a_m\text{, for all } t\in[t_{0},t_{f}];\\\nonumber
		&|x(t)-y(t)|\geq l \text{, for all } t\in[t_{0},t_{f,y}];\\\nonumber
		&x(t_{0}) =x_{0} \textrm{; }x^{\prime}(t_{0})=v_0 ;\\\nonumber
		&x(t_{f})=0 \textrm{; } x^\prime(t_{f})=v_m.
		\end{align}
		\State Output: $x(t)$.
	\end{algorithmic}
\end{algorithm}

Algorithm~\ref{alg:aanrijdenEigen} is solvable under the set of conditions formulated above, i.e. regularity of the PFA, a sufficiently big control region and the assumption on decelerating of a predecessor of a vehicle.
The main difference is that instead of minimizing the distance from vehicle to intersection, we minimize the (absolute value of the) acceleration applied by the vehicle while being in the control region. This obviously has consequences for the amount of energy consumption (it will be lower than in the MotionSynthesize procedure). Disadvantages include that the physical length of the queue grows and that vehicles cannot enter the control region as close to each other (as vehicles slow down immediately when entering the control region). 

In the next subsection we present closed-form solutions to the MotionSynthesize procedure and Algorithm~\ref{alg:aanrijdenEigen}, similar in spirit as the results in e.g.~\cite{lawitzky2013energy} and~\cite{dib2014optimal}. So instead of the need to solve a linear optimization problem each time, we have a simple set of calculations that we can perform to find the trajectory of a vehicle, which is optimal with respect to minimizing the distance or acceleration.
These closed-form expressions immediately show why Algorithm~3 is solvable. In Remark \ref{rem:solv} we return to this topic.

\subsection{Closed-form Speed Profile Algorithms}

We start with two important observations that form the basis for our closed-form SPAs:
\begin{enumerate}
\item the optimization problem formulated in the MotionSynthesize procedure always leads to piece-wise constant acceleration;
\item if all vehicles decelerate (and possibly stop) at most \emph{once}, at most four changes in the acceleration occur.
\end{enumerate}

These observations imply that if we can find the points at which the acceleration changes, we are able to determine the trajectory analytically and in closed-form. We give some intuition behind the main ideas of Algorithm~\ref{alg:aanrijdenAna}. We have to plan the trajectory from $t_0$, the current time, until $t_f$, the crossing time set by the PFA. It is sufficient to give the acceleration for any time $t\in[t_0, t_f]$. Indeed, it is true that the exhaustive and gated algorithms have the desirable property that vehicles need to decelerate at most once. From the polling literature we know that exhaustive service ensures that customers will always be served before the end of the cycle in which they arrive. With gated service, customers will always be served in the next cycle. Translated to our traffic model, this means that no vehicle will ever need to stop more than once. As a consequence, the acceleration is piece-wise constant and changes at most four times.
We shortly describe those five parts of the arriving trajectory.

\begin{itemize}
	\item No acceleration or deceleration from $t_0$ until $t_{\textit{dec}}$;
	\item Deceleration at maximum rate from $t_{\textit{dec}}$ until $t_{\textit{stop}}$;
	\item A stop from $t_{\textit{stop}}$ until $t_{\textit{acc}}$;
	\item Acceleration at maximum rate from $t_{\textit{acc}}$ until $t_{\textit{full}}$;
	\item No acceleration or deceleration from $t_{\textit{full}}$ until $t_{f}$.
\end{itemize}
All that remains is that we have to find $t_{\textit{dec}}$, $t_{\textit{stop}}$, $t_{\textit{acc}}$ and $t_{\textit{full}}$ in such a way that we minimize the average distance between vehicle and intersection.
This is done in the closed-form solution of the MotionSynthesize procedure (Algorithm~\ref{alg:aanrijdenAna}), where we assume that $t_{0}=0$ to ease the notation and that $v_0=v_m$. We can allow for general $v_0$, but we show later that this would always result in a sub-optimal trajectory. The input consists of the (negative) distance between vehicle and intersection at $t=0$, again denoted by $x_0$, the scheduled crossing time of the vehicle, $t_f$, and the trajectory of the predecessor of the vehicle for which we are currently planning the trajectory, $y$, and its crossing time $t_{f,y}$ .
We  prove that the MotionSynthesize procedure and Algorithm~\ref{alg:aanrijdenAna} are equivalent, which is the subject of the next lemma.

\begin{algorithm}
	\caption{closed-form solution to the MotionSynthesize procedure.}\label{alg:aanrijdenAna}
	\begin{algorithmic}[1]
		\State Input: $x_0$, $t_{f}$, $t_{f,y}$, and $y$.
		\If {$t_f-t_{f,y}=B$} \label{algAnl:3}
		\State Consider trajectory $y$ and determine the time at which the vehicle continues at full speed. Call this time $t_{\textit{full}}$.\label{algAnl:4}
		\Else \label{alglAn:5}
		\State Put $t_{\textit{full}}\gets t_{f}$.
		\EndIf \label{alglAn:7}
		\State  Put
		\begin{equation}\label{eq:Lp} L \gets  v_m\left(t_f-\frac{v_m}{a_m}\right). \end{equation} \label{alg:anaLoc9}\Comment{$L$ represents the distance covered if a vehicle stops for 0 seconds}
		\If{$L\geq |x_0|$ } \Comment{The vehicle has to stop}
		\State Put $t_{\textit{acc}}\gets t_{\textit{full}}-v_m/a_m$.\label{algAnl:10}
		\State Put $t_{\textit{stop}}\gets t_{\textit{acc}}-(t_{f}-v_m/a_m-|x_0|/v_m)$.
		\State Put $t_{\textit{dec}}\gets t_{\textit{stop}}-v_m/a_m$. \label{algAnl:12}
		\Else \Comment{The vehicle does not have to stop}
		\State Define
		\begin{equation}\label{eq:t}
		\tilde{t} \gets  \sqrt{\frac{t_{f}v_m-|x_0|}{a_m}}.
		\end{equation}\Comment{$\tilde{t}$ is the deceleration time}
		\State Put $t_{\textit{acc}}\gets t_{\textit{full}}-\tilde{t}$.\label{algAnl:13}
		\State Put $t_{\textit{stop}} \gets  t_{\textit{acc}}$.
		\State Put $t_{\textit{dec}}\gets t_{\textit{acc}}-\tilde{t}$.\label{algAnl:15}
		\EndIf
		\State Then \begin{equation} \label{eq:acce}
		a(t) =  x^{\prime\prime}(t) \gets \begin{cases}
		0 &\textrm { if } 0 \leq t < t_{\textit{dec}}, \\
		-a_m & \textrm{ if } t_{\textit{dec}} \leq t < t_{\textit{stop}}, \\
		0 & \textrm{ if }t_{\textit{stop}} \leq t < t_{\textit{acc}}, \\
		a_m & \textrm { if } t_{\textit{acc}} \leq t < t_{\textit{full}}, \\
		0 & \textrm { if } t_{\textit{full}} \leq t < t_f.
		\end{cases}
		\end{equation}
		\State Knowing $a(t)$, we can compute $x(t)$ by integrating twice and using the conditions $x(0)=x_0$ and the velocity at time 0 being $v_m$.
		\State Output: $x(t)$.
	\end{algorithmic}
\end{algorithm}

\begin{lemma}\label{l:3=5}
	The MotionSynthesize procedure and Algorithm~\ref{alg:aanrijdenAna} are equivalent in the sense that both minimize the distance between vehicle and intersection across the time period $t_0$ to $t_f$.
\end{lemma}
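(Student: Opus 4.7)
The plan is to show that the distance-minimizing trajectory of the MotionSynthesize procedure necessarily has the five-phase cruise--decelerate--stop--accelerate--cruise acceleration profile of~\eqref{eq:acce}, and that the break-points $t_{\textit{dec}},t_{\textit{stop}},t_{\textit{acc}},t_{\textit{full}}$ are then pinned down by the boundary conditions together with a timing/distance balance. First I would observe that, because $x(t_0)=x_0<0$, $x(t_f)=0$ and $0\le x'(t)\le v_m$, every feasible trajectory satisfies $x(t)\le 0$, so minimizing $\int_{t_0}^{t_f}|x(t)|\,dt$ is equivalent to maximizing $\int_{t_0}^{t_f}x(t)\,dt$: the vehicle should be as close to the intersection as possible at every instant.

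Next I would establish a bang-bang characterization. Either by Pontryagin's maximum principle or by a direct exchange argument, since the running cost is linear in the state and the control enters linearly through $x''=a$ with box constraint $|a|\le a_m$, the optimal $a(t)$ takes values in $\{-a_m,0,a_m\}$ almost everywhere, and each of the state constraints $v=0$ and $v=v_m$ is active on at most one interval. Since $v(t_0)=v(t_f)=v_m$, any interval of $a=-a_m$ must be matched by an equal-area interval of $a=+a_m$. A rearrangement then shows that consolidating all deceleration into a single late interval and all acceleration into a single early interval strictly increases $\int x\,dt$, so the optimizer has at most one decel phase and at most one accel phase. Combined with the bang-bang conclusion, this yields the five-phase template of~\eqref{eq:acce} with an optionally degenerate stop.

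Given this template, the break-points follow from elementary kinematics. The inherited value of $t_{\textit{full}}$ (the predecessor's full-speed time when $t_f-t_{f,y}=B$, and $t_f$ otherwise) together with $v(t_{\textit{full}})=v_m$ forces $t_{\textit{acc}}=t_{\textit{full}}-v_m/a_m$. The total-distance constraint $\int_{t_0}^{t_f}v(t)\,dt=|x_0|$, combined with the fact that symmetric decel and accel legs each cost area $v_m^2/(2a_m)$ below the full-speed line, shows that the threshold distance between stopping and not stopping is exactly $L=v_m(t_f-v_m/a_m)$ from~\eqref{eq:Lp}. If $L\ge |x_0|$ a stop is required and the stop length $t_{\textit{acc}}-t_{\textit{stop}}=t_f-v_m/a_m-|x_0|/v_m$ recovers lines~\ref{algAnl:10}--\ref{algAnl:12}; otherwise the decel length solves the quadratic equating the distance lost to $|x_0|-L$, giving $\tilde t$ in~\eqref{eq:t} and lines~\ref{algAnl:13}--\ref{algAnl:15}. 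Finally, the safety constraint $|x(t)-y(t)|\ge l$ on $[t_0,t_{f,y}]$ is verified by the specific choice of $t_{\textit{full}}$ when the vehicle joins its predecessor's platoon, and holds automatically under regularity and the control-region size assumption otherwise.

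The main obstacle is the bang-bang step: the objective is only weakly convex, and one must rule out both interior accelerations and fragmented decel/accel pieces. A clean way to package this is to restrict a priori to piecewise-constant accelerations in $\{-a_m,0,a_m\}$ with at most one decel and one accel block, solve the resulting essentially one-dimensional optimization in closed form, and then use a short Pontryagin/exchange argument to verify that no trajectory outside this class can beat the explicit candidate.
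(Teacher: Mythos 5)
Your two-stage architecture --- first establish the five-phase template, then compute the break-points --- matches the paper's own decomposition into two segments, and your break-point computations (the threshold $L$ in \eqref{eq:Lp}, $t_{\textit{acc}}=t_{\textit{full}}-v_m/a_m$, the stop duration, and $\tilde t$ in \eqref{eq:t}) are correct and agree with lines~\ref{algAnl:10}--\ref{algAnl:15}. The paper reaches the same values differently: it writes out $x(t)$ piecewise, computes the area $\mathcal{A}_v$ in closed form, and shows it is decreasing in $v_0$ and linearly decreasing in $t_{\textit{full}}$ --- which is how it \emph{justifies} taking $v_0=v_m$ and $t_{\textit{full}}$ maximal, two choices your sketch takes as given. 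Where you genuinely diverge is the global-optimality step. The paper avoids Pontryagin and bang-bang arguments entirely: its second segment shows that the explicit candidate $x(t)$ \emph{pointwise} dominates every feasible alternative $\tilde{x}(t)$, interval by interval ($\tilde{x}(0)=x(0)$ and $\tilde{x}'\leq v_m=x'$ on the initial cruise; backward comparison from $t_{\textit{full}}$ on the acceleration leg using $\tilde{x}''\leq a_m$; and so on), so $|\tilde{x}(t)|\geq|x(t)|$ for all $t$ and the integral inequality is immediate. This is more elementary and more robust than a maximum-principle argument here, precisely because the state constraints $0\leq x'\leq v_m$ are active on whole intervals --- the regime where Pontryagin with state constraints is delicate, and the obstacle you yourself flag as the weak point of your plan.

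Two concrete issues to repair if you keep your route. First, your consolidation claim has the wrong direction for the acceleration block: since $x(t)\leq 0$, minimizing $\int_0^{t_f}|x(t)|\,\text{d}t$ is the same as maximizing $\int_0^{t_f}(t_f-s)\,x'(s)\,\text{d}s$, so speed must be front-loaded; hence deceleration is pushed as \emph{late} as possible and acceleration is also as late as the condition $x'(t_{\textit{full}})=v_m$ permits (ending exactly at $t_{\textit{full}}$), not into ``a single early interval''. Second, the at-most-one-deceleration structure is not derived from the optimization alone in the paper: it is inherited from the exhaustive/gated PFAs (a vehicle is never rescheduled in a way that forces a second stop), and the pointwise-dominance argument then certifies optimality against \emph{all} feasible trajectories, fragmented or not. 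If you want to extract the single-block structure purely from an exchange argument, you must also handle the predecessor constraint $|x(t)-y(t)|\geq l$, which can obstruct naive rearrangement; the paper absorbs that constraint entirely into the definition of $t_{\textit{full}}$.
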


\begin{proof}
We split the proof in two parts. First we prove that the times $t_{\textit{dec}}$, $t_{\textit{stop}}$, $t_{\textit{acc}}$ and $t_{\textit{full}}$ in Algorithm~\ref{alg:aanrijdenAna} indeed result in the trajectory having the minimal area under the distance-time graph, \emph{assuming that} the optimal trajectory contains at most one period of deceleration. Then we prove that the obtained \emph{form} of the trajectory, with at most one period of deceleration, is indeed optimal.

\begin{proofpart}
As indicated before, for now, we only consider trajectories that contain at most \emph{one} period of deceleration. We allow that $v_0<v_m$ (but we will show now that that is suboptimal), but we do require that $v(t_{\textit{full}})=v(t_f)=v_m$. We distinguish between the case where a vehicle comes to a full stop and the case where it does not.

\paragraph*{Full stop. }
First we consider the case where the vehicle (denoted by $V$) comes to a full stop, from $t=t_{\textit{stop}}$ to $t=t_{\textit{acc}}$. This class of trajectories is visualised as the black line in Figure~\ref{fig:optimalTrajectories}. It turns out that this curve is completely characterised by two parameters, which we choose to be the initial speed $v_0$ and the moment when we start driving at full speed again, $t_{\textit{full}}$.

\begin{figure}[!ht]
\begin{center}
\textsl{2}\includegraphics[width=0.7\textwidth]{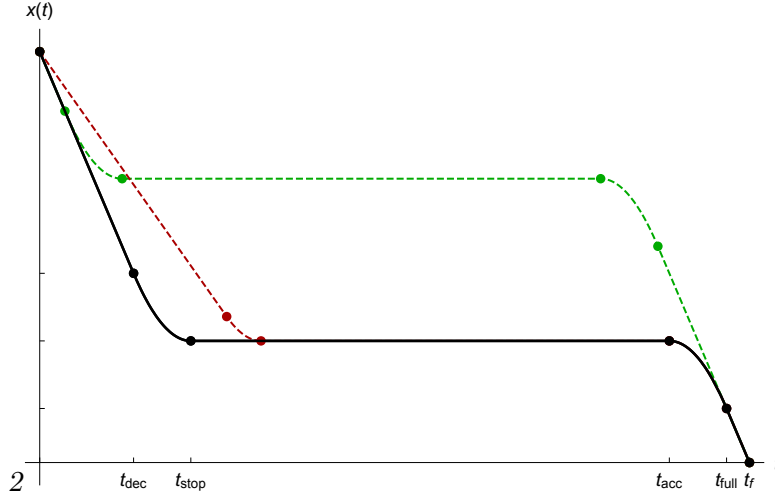}
\end{center}
\caption{Three sample trajectories with one full stop. The optimal trajectory is plotted in black. The dashed green trajectory has a smaller value of $t_{\textit{full}}$ compared to the optimal trajectory, whereas the dashed red trajectory has a smaller value of $v_0$.}
\label{fig:optimalTrajectories}
\end{figure}

The optimization criterion in the MotionSynthesize algorithm is to minimize the area below the graph $|x(t)|$ for $0\leq t \leq t_f$. This is equivalent to minimizing the average distance to the intersection. First we will give an intuitive explanation as to why it makes sense to continue at full speed as long as possible. In Figure~\ref{fig:optimalTrajectories} we have plotted two alternative trajectories to show that they result in a larger average distance to the intersection. The red dashed trajectory is equivalent to the optimal trajectory, but with a lower starting speed $(v_0<v_m)$. By starting at a lower speed, while fixing $t_{\textit{full}}$, we have to continue longer at this lower speed before we come to a complete stop. This means that $t_{\textit{dec}}$ and $t_{\textit{stop}}$ increase, which immediately increases the area below the graph. Another alternative is the dashed green trajectory, which starts at full speed, but has a lower value for $t_{\textit{full}}$. Note that $t_{\textit{full}}$ is restricted by $V$'s predecessor. Without predecessor, it is optimal to take $t_{\textit{full}}=t_f$, but if there is a predecessor (which apparently is the case for the black trajectory in Figure~\ref{fig:optimalTrajectories}), it is optimal to let both vehicles have the same $t_{\textit{full}}$. This is the only way to ensure that both vehicles cross the intersection at full speed, with minimum distance between them. Taking a smaller value of $t_{\textit{full}}$, as in the green trajectory, means that $V$ comes to a stop further from the intersection, which significantly increases the average distance.

These arguments provide an intuitive explanation, but we will formalize this now by explicitly computing the area below $|x(t)|$ for our closed-form trajectories. First we give the closed-form expression for $x(t)$, by considering the five sub-areas separately, and using the fact that $x(t)$ is linear when the speed is constant and quadratic while decelerating/accelerating. Equation \eqref{eqn:xt} is easiest to understand when starting at $t=t_f$ and constructing the trajectory backwards to $t=0$, and using these
auxiliary results:
\[
t_{\textit{stop}}-t_{\textit{dec}}=\frac{v_0}{a_m}, t_{\textit{full}}-t_{\textit{acc}}=\frac{v_m}{a_m},
x(t_{\textit{stop}})-x(t_{\textit{dec}})=\frac{v_0^2}{2a_m}, x(t_{\textit{full}})-x(t_{\textit{acc}})=\frac{v_m^2}{2a_m}.
\]
We obtain:
\begin{equation}
x(t)=\begin{cases}
(t-t_f)v_m & \quad \text{ for }t_{\textit{full}}\leq t\leq t_f,\\
(t_{\textit{full}}-t_f)v_m - \frac{v_m^2}{2a_m} + \frac{a_m}{2}(t-t_{\textit{acc}})^2 & \quad \text{ for }t_{\textit{acc}}\leq t\leq t_{\textit{full}},\\
(t_{\textit{full}}-t_f)v_m - \frac{v_m^2}{2a_m} &\quad \text{ for }t_{\textit{stop}}\leq t\leq t_{\textit{acc}},\\
(t_{\textit{full}}-t_f)v_m - \frac{v_m^2}{2a_m} - \frac{a_m}{2}(t-t_{\textit{stop}})^2   &\quad \text{ for }t_{\textit{dec}}\leq t\leq t_{\textit{stop}},\\
x_0 + v_0 t &\quad \text{ for }0\leq t\leq t_{\textit{dec}}.\\
\end{cases}
\label{eqn:xt}
\end{equation}
Note that $t_{\textit{dec}}$ follows from continuity of $x(t)$:
\[
t_{\textit{dec}}=\frac{1}{v_0}\left(|x_0|-(t_f-t_{\textit{full}})v_m-\frac{v_0^2+v_m^2}{2a_m}\right).
\]
The area below the trajectory, $\displaystyle\mathcal{A}_v:= \int_0^{t_f}|x(t)|\text{d}t$, is equal to:
\begin{align}
\mathcal{A}_v =\,&
\frac{t_{\textit{dec}}}{2}\left(x(t_{\textit{dec}})-x_0+\frac{v_0^2}{a_m}\right)+ \frac{v_0^3}{6a_m^2}\nonumber\\ &+t_{\textit{full}}\left((t_f-t_{\textit{full}})v_m + \frac{v_m^2}{2a_m}\right)-\frac{v_m^3}{6a_m^2}+ \frac{v_m}{2} (t_f-t_{\textit{full}})^2\nonumber\\
=\,&\frac{v_0^4+3(v_m^2+2a_m((t_f-t_{\textit{full}})v_m+x_0))^2}{24a_m^2v_0}\nonumber\\
&+\frac{v_m}{2}\left(t_f^2-t_{\textit{full}}^2+t_{\textit{full}}\frac{v_m}{a_m}\right) -\frac{v_m^3}{6a_m^2}.
\label{eqn:xtArea}
\end{align}
We now exploit that only the first part of the expression for $\mathcal{A}_v$ depends on the initial speed $v_0$, as observed before. By taking the derivative with respect to $v_0$ and using $v_0\leq v_m$ it follows that $\mathcal{A}_v$ is decreasing in $v_0$, under the following condition:
\[
(t_f-t_{\textit{full}})v_m+2\frac{v_m^2}{2a_m} \leq |x_0|.
\]
This is exactly the ``no overcrowding'' assumption discussed earlier, which now gets quantified: a vehicle entering the control region at full speed should have enough space to come to a full stop and accelerate again in order to reach full speed at time $t_{\textit{full}}$. This proves that the initial speed should be taken as large as possible, i.e. $v_0=v_m$.

Now that we have established that we should choose $v_0=v_m$, we assume this equality from now on and denote the area as $\mathcal{A}$ (to distinguish it from $\mathcal{A}_v$). This significantly simplifies the expression, which now becomes
\begin{align}
\mathcal{A} =\,&
(v_m t_f+x_0)\left(t_f-t_{\textit{full}}+\frac{v_m}{2a_m}\right)+\frac{x_0^2}{2v_m}.
\label{eqn:xtArea2}
\end{align}
It is readily seen that the area $\mathcal{A}$ is now \emph{linearly decreasing} in $t_{\textit{full}}$, which immediately proves that we should take $t_{\textit{full}}$ as large as possible to minimize $\mathcal{A}$. Exactly how large $t_{\textit{full}}$ is allowed to be, depends on the predecessor.

\paragraph*{No full stop. }
	
We now briefly consider the case where $V$ does not come to a full stop. The analysis is quite similar, so we will mainly focus on the differences. The first difference is that $t_{stop}$ is removed from the trajectory. Instead, we now have that the speed at $t=t_\textit{acc}$ is greater than zero. Note that this speed, which we denote by $v_1$, is less than or equal to $v_0$, because $V$ decelerates between $t_\textit{dec}$ and $t_\textit{acc}$. The trajectory $x(t)$ now consists of at most four parts,  given by:
\begin{equation}
x(t)=\begin{cases}
(t-t_f)v_m & \quad \text{ for }t_{\textit{full}}\leq t\leq t_f,\\
(t-t_f)v_m + \frac{a_m}{2}(t-t_{\textit{full}})^2 & \quad \text{ for }t_{\textit{acc}}\leq t\leq t_{\textit{full}},\\
x_0 + v_0 t - \frac{a_m}{2}(t-t_{\textit{dec}})^2 & \quad \text{ for }t_{\textit{dec}}\leq t\leq t_{\textit{acc}},\\
x_0 + v_0 t &\quad \text{ for }0\leq t\leq t_{\textit{dec}}.\\
\end{cases}
\label{eqn:xtNoStop}
\end{equation}
We can eliminate the unknowns by using the relations
\[
t_{\textit{acc}}-t_{\textit{dec}}=\frac{v_0-v_1}{a_m}, t_{\textit{full}}-t_{\textit{acc}}=\frac{v_m-v_1}{a_m}.
\]
The requirement that $x(t)$ is continuous in $t_\textit{acc}$ leads to the last equation that can be solved to obtain $t_{\textit{acc}}$. The area below $|x(t)|$ can now be computed:
\begin{align}
\mathcal{A}_v =\,&\frac{v_m}{2}(t_f-t_{\textit{acc}})^2+\frac{(v_0-v_1)^3-(v_m-v_1)^3}{6a_m^2}-x_0t_{\textit{acc}}-\frac{v_0}{2} t_{\textit{acc}}^2 .
\label{eqn:xtAreaNoStop}
\end{align}
Eliminating $t_{\textit{acc}}$ and differentiating with respect to $v_1$ immediately shows that $\mathcal{A}_v$ is decreasing in $v_1$. Since we are trying to minimize $\mathcal{A}_v$, we should take $v_1$ as large as possible, i.e. $v_1=v_0$. After this substitution, all expressions simplify and it can again be shown that the derivative of $\mathcal{A}$ with respect to $v_0$ is always less than or equal to zero, where equality is only reached when $t_\textit{acc}=0$ and there is no other option for $V$ than to accelerate immediately. This means that we should take $v_0$ as large as possible, which again implies that we should take $t_\textit{full}$ as large as possible.

It should be noted that the case $v_0=v_m$ needs to be considered separately, because if the conditions allow a maximal initial speed, $v_1$ is completely fixed:
\[
v_1=v_m-\sqrt{a_m(t_fv_m-|x_0|)}.
\]
This means that $t_\textit{full}$ does not follow from $v_0$, but it can be chosen arbitrarily (between the minimum and maximum allowed values). In this case we have
\[
t_\textit{acc}=t_{\textit{full}}-\frac{v_m-v_1}{a_m} = t_{\textit{full}}-\frac{v_m-\big(v_m-\sqrt{a_m(t_fv_m-|x_0|)}\big)}{a_m} = t_{\textit{full}}-\tilde{t},
\]
with $\tilde{t}$ as defined in \eqref{eq:t}.

\paragraph*{Implementation. } Algorithm \ref{alg:aanrijdenAna} is an implementation of the optimal trajectory for the general case. The formulation of the algorithm is slightly different, because we are using the results that $v_0$ and $t_\textit{full}$ should be as large as possible. As argued above, an upper bound to the time $t_{\textit{full}}$ is determined by the trajectory $y$ of the predecessor of $V$, and is fixed. If the crossing times differ a time $B$, then the time at which the predecessor starts driving at full speed, $t_{f,y}$, should equal $t_{\textit{full}}$ (because we want to take it as large as possible), and otherwise it is simply $t_f$, which is the way we choose $t_{\textit{full}}$ in lines~\ref{algAnl:3}-\ref{alglAn:7}.

The astute reader will also notice that we do not provide an explicit expression for $x(t)$ in Algorithm \ref{alg:aanrijdenAna}. Instead, we provide its second derivative, $a(t)$, and the boundary conditions. This has the advantage that we have one formulation that is valid for both cases (full stop and no full stop). One can easily verify that \eqref{eqn:xt} (full stop) and  \eqref{eqn:xtNoStop} (no full stop) both reduce to \eqref{eq:acce} after differentiating twice, and that $t_\textit{dec}$, $t_\textit{stop}$, $t_\textit{acc}$ and $t_\textit{full}$ as computed in Algorithm \ref{alg:aanrijdenAna} correspond to the values discussed in the first part of the proof. Note that we choose $t_{\textit{stop}}=t_{\textit{acc}}$ in case of no full stop.
	
Then combining the defined times, we obtain \eqref{eq:acce}, which minimizes the area under the distance-time graph. This is exactly the same criterion as we optimize for in the MotionSynthesize procedure. The only thing left to show, is that all other trajectories satisfying the required constraints regarding maximum speed and acceleration, have a larger average distance to the intersection than the one we obtain.
\end{proofpart}


\begin{proofpart}
This part is significantly shorter, proving that the obtained trajectory is really optimal with respect to the criterion of smallest average distance to the intersection. We remind the reader that we explicitly exploit the property of the polling-based PFAs that each vehicle needs to decelerate (and possibly stop) at most once. Intuitively, the optimality is quite apparent: in order to minimize the average distance to the intersection, a vehicle entering the control region needs to drive at full speed as long as possible. Assume that $x(t)$ is a trajectory defined by \eqref{eqn:xt} with $v_0 = v_m$ and $t_{\textit{full}}$ as large as possible. We now consider an alternative trajectory $\tilde{x}(t)\neq x(t)$. We compare $x(t)$ with $\tilde{x}(t)$ on the five parts of the trajectory.
\begin{itemize}
\item For $0\leq t\leq t_{\textit{dec}}$ it is completely obvious that $|\tilde{x}(t)|\geq |x(t)|$, because $\tilde{x}(0)=x(0) = x_0$ and $\tilde{x}'(t)\leq x'(t) = v_m$ for $0\leq t\leq t_{\textit{dec}}$.
\item We now turn to the \emph{last} part of the trajectory. For $t_{\textit{full}}\leq t\leq t_{f}$ it is we have $\tilde{x}(t)= x(t)$ because $t_{\textit{full}}$ was defined as the largest possible value for $t$ where $V$ should start driving at full speed.
\item Looking at the part before this one, $t_{\textit{acc}}\leq t\leq t_{\textit{full}}$, we see that $|\tilde{x}(t)|\geq |x(t)|$ because $\tilde{x}'(t_{\textit{full}})=x'(t_{\textit{full}})=v_m$ and $\tilde{x}''(t)\leq x''(t)=a_m$.
\item The period $t_{\textit{stop}}\leq t\leq t_{\textit{acc}}$ is also trivial, because $\tilde{v}(t)\geq v(t)=0$ here, meaning that $|\tilde{x}(t)|\geq |x(t)|$.
\item This leaves us with the last part, which is the second period $t_{\textit{dec}}\leq t\leq t_{\textit{stop}}$. We have already established that $|\tilde{x}(t_{\textit{dec}})|\geq |x(t_{\textit{dec}})|$ and $|\tilde{x}(t_{\textit{stop}})|\geq |x(t_{\textit{stop}})|$. Since $\tilde{x}'(t_{\textit{dec}})\leq x'(t_{\textit{dec}})=v_m$ and $\tilde{x}''(t)\leq x''(t)=a_m$, it also follows that $|\tilde{x}(t)|\geq |x(t)|$ in this area.
\end{itemize}
The conclusion is that for all $t\in[0,t_f]$ we have $|\tilde{x}(t)|\geq |x(t)|$, which implies that
\[
\int_0^{t_f}|\tilde{x}(t)|\text{d}t \geq \int_0^{t_f}|x(t)|\text{d}t.
\]
This proves that the path $x(t)$ is optimal with respect to the criterion of the MotionSynthesize procedure. Since it has also been proven in \cite{miculescu2014polling} that the MotionSynthesize algorithm yields an optimal path, both algorithms must return the same path.
\end{proofpart}
\end{proof}

\begin{remark}
Although the exhaustive and gated PFA ensure that there is at most \emph{one} period of deceleration, possibly a stop, and acceleration, for other disciplines, like the Batch Algorithm or the $k$-limited discipline (which is also based on polling models), this might not be the case, and the period from $t_0$ until $t_f$ might have to be split in more than five different periods. A similar type of speed profile algorithm is still possible, but is more involved and therefore omitted in interest of space and clarity of the algorithm and argumentation.
\end{remark}

So, Algorithm~\ref{alg:aanrijdenAna} has the same desirable properties as the MotionSynthesize procedure, but is computationally much less expensive and also provides intuition on the shape of the trajectories. A visualization of such trajectories can be found in Figure~\ref{f:trajAccExMinL}.

\null

\noindent We can also formulate such an alternative for Algorithm~\ref{alg:aanrijdenEigen}, where we, again, put $t_0=0$ to ease the notation. We allow for general $v_{0}$ now. In fact, this is essential to this algorithm, because a vehicle might start decelerating immediately upon arrival in the SPA part of the control region. We assume that a following vehicle has decelerated accordingly, if necessary, in the PFA part of the control region. In practice, either vehicle-to-vehicle (V2V) or controller-to-vehicle (V2I) communication might be used to ensure this speed adjustment.
The general structure of Algorithm~\ref{alg:aanrijdenEigen} is similar to Algorithm~\ref{alg:aanrijdenAna}.
Also in this case, the acceleration is piece-wise constant, yet there are at most three changes in the acceleration. We shortly describe those four parts of the arriving trajectory.
\begin{itemize}
	\item Deceleration at maximum rate from $t_{0}$ until $t_{cruise}$;
	\item No acceleration or deceleration from $t_{cruise}$ until $t_{\textit{acc}}$;
	\item Acceleration at maximum rate from $t_{\textit{acc}}$ until $t_{\textit{full}}$;
	\item No acceleration or deceleration from $t_{\textit{full}}$ until $t_{f}$.
\end{itemize}
 This is also visible in Figure~\ref{f:trajAccExMinL}.
Note that we start decelerating as soon as possible, because we want to cruise at a relatively low speed. If we would not cruise at a low speed, then we would have to decelerate more (as we covered a longer distance at a high speed). So we decelerate maximally for some time, continue at a constant speed for some time and then accelerate maximally (taking advantage of the lower cruising speed as long as possible). The resulting algorithm is formulated in Algorithm~\ref{alg:aanrijdenEigenAna} and equivalence with Algorithm~\ref{alg:aanrijdenEigen} is proven thereafter.

\begin{algorithm}[!ht]
	\caption{closed-form solution to Algorithm~\ref{alg:aanrijdenEigen}.}\label{alg:aanrijdenEigenAna}
	\begin{algorithmic}[1]
		\State Input: $x_0$, $v_0$, $t_{f}$, $t_{f,y}$, and $y$.
		\If {$t_f-t_{f,y}=B$}
		\State Consider trajectory $y$ and determine the time at which the vehicle continues at full speed. Call this time $t_{\textit{full}}$.
		\Else
		\State Put $t_{\textit{full}}\gets t_{f}$.
		\EndIf
		\State  Put
		\begin{equation}\label{eq:cruise}
		\begin{aligned}
		t_{1} &\gets  \frac{a_mt_f+v_{0}-v_m}{2a_m}-\\&\frac{\sqrt{4a_m |x_0| +(a_m t_{f}-v_{0})^2-2(a_mt_fv_m+v_0^2)-4a_m(t_f-t_{\textit{full}})v_m+2v_{0}v_m-v_{m}^2}}{2a_m}
		\end{aligned}
		\end{equation}
		\State Put
		\begin{equation}\label{eq:tac}
		\begin{aligned}
		t_{2} &\gets  \frac{a_mt_f+v_{0}-v_m}{2a_m}+\\&\frac{\sqrt{4a_m |x_0| +(a_mt_{f}-v_{0})^2-2(a_mt_fv_m+v_0^2)-4a_m(t_f-t_{\textit{full}})v_m+2v_{0}v_m-v_{m}^2}}{2a_m}		\end{aligned}
		\end{equation}
		\State Put $t_{cruise} = t_1$ and $t_{\textit{acc}}=t_2$.
		\State Then,
		\begin{equation}\label{eq:acceOwn}
		a(t) =  x^{\prime\prime}(t) \gets \begin{cases}
		-a_m &\textrm { if } 0 \leq t < t_{cruise}, \\
		0 & \textrm{ if } t_{cruise} \leq t < t_{\textit{acc}}, \\
		a_m & \textrm { if } t_{\textit{acc}} \leq t < t_{\textit{full}}, \\
		0 & \textrm { if } t_{\textit{full}} \leq t < t_f.
		\end{cases}
		\end{equation}
		\State Knowing $a(t)$, we can compute $x(t)$ by integrating twice and using the conditions $x(0)=x_0$ and $v(0)=v_0$.
		\State Output: $x(t)$.
	\end{algorithmic}
\end{algorithm}


\begin{lemma} \label{l:4=6}
Algorithm~\ref{alg:aanrijdenEigen} and Algorithm~\ref{alg:aanrijdenEigenAna} are equivalent in the sense that both minimize the absolute value of the applied acceleration across the time period $t_0$ to $t_f$.
\end{lemma}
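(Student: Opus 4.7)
The proof will mirror the two-part structure of the proof of Lemma~\ref{l:3=5}. Part~1 restricts attention to the four-phase trajectory form stipulated in Algorithm~\ref{alg:aanrijdenEigenAna} (immediate deceleration at rate $-a_m$, cruise at constant speed $v_c$, acceleration at rate $+a_m$ until reaching $v_m$ at $t_{\textit{full}}$, and cruise at $v_m$ thereafter) and shows that the closed-form expressions \eqref{eq:cruise}--\eqref{eq:tac} yield the trajectory in this class that minimizes $\int|a(t)|\,dt$. Part~2 then argues that any optimal trajectory lies in this class.

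For Part~1, the critical simplification is
\[
\int_{t_0}^{t_f}|a(t)|\,dt \;=\; a_m\,t_{\textit{cruise}} + a_m(t_{\textit{full}}-t_{\textit{acc}}) \;=\; (v_0 - v_c) + (v_m - v_c) \;=\; v_0 + v_m - 2v_c,
\]
so that within the four-phase class the problem reduces to maximizing the cruise speed $v_c$. Using $v_c = v_0 - a_m\,t_{\textit{cruise}}$ and $v_m - v_c = a_m(t_{\textit{full}} - t_{\textit{acc}})$, the distance condition $\int_0^{t_f} v(t)\,dt = |x_0|$ becomes a single quadratic in $v_c$. A direct computation (expanding the decel, middle-cruise, accel and final-cruise contributions and collecting) shows that the two roots of the resulting quadratic, read off through $t_{\textit{cruise}}=(v_0-v_c)/a_m$ and $t_{\textit{acc}}=t_{\textit{full}}-(v_m-v_c)/a_m$, coincide with $t_1$ and $t_2$ in \eqref{eq:cruise}--\eqref{eq:tac}, and that choosing the smaller root for $t_{\textit{cruise}}$ corresponds to the larger $v_c$. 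Integrating the acceleration profile \eqref{eq:acceOwn} twice with initial data $x(0)=x_0$, $v(0)=v_0$ then verifies that the resulting $x(t)$ meets $x(t_f)=0$, $v(t_f)=v_m$, obeys $0\leq v\leq v_m$ and $|a|\leq a_m$, and, under the mild assumption on the predecessor's deceleration introduced before the algorithm, respects the spacing $|x(t)-y(t)|\geq l$.

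For Part~2, I would observe that when every vehicle decelerates at most once (as enforced by the exhaustive or gated PFA), $\int|a(t)|\,dt$ equals the total variation of $v$, namely $v_0+v_m-2v_{\min}$ with $v_{\min}:=\inf_{t\in[0,t_f]}v(t)$, so minimizing the integral is equivalent to maximizing $v_{\min}$. Following Part~2 of the proof of Lemma~\ref{l:3=5}, I would compare an arbitrary competing feasible trajectory $\tilde{x}$ against $x$ and show: (a) any positive delay before the start of deceleration strictly shrinks the interval available for cruising at $v_{\min}$, forcing $v_{\min}$ down; (b) any sub-maximal deceleration or acceleration rate has the same effect, since it lengthens the transitions and again steals time from the cruise phase; and (c) choosing $t_{\textit{full}}$ as in lines~2--6 of Algorithm~\ref{alg:aanrijdenEigenAna} is the tightest setting compatible with $|x(t)-y(t)|\geq l$. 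Item~(c) is the main obstacle, because a naive perturbation that enlarges $v_c$ may violate spacing to the predecessor; it is resolved by exploiting that the predecessor's trajectory $y$ is itself four-phase, so that aligning the ``time at $v_m$'' of $V$ with that of its predecessor is simultaneously the tightest safe choice and the one that maximizes $v_c$, exactly as in Lemma~\ref{l:3=5}. Combining (a)--(c) shows $v_{\min}$ is maximized by the four-phase trajectory produced in Part~1, completing the equivalence.
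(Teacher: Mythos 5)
Your proposal is correct and follows the same two-segment skeleton as the paper's proof: one segment establishes that the optimal trajectory has the four-phase form (immediate maximal deceleration, cruise, maximal acceleration ending at $t_{\textit{full}}$, full speed until $t_f$), and the other obtains $t_1,t_2$ as the two roots of the quadratic that results from equating the traversed distance to $|x_0|$, with $t_{\textit{full}}$ inherited from the predecessor exactly as in lines 2--6 of Algorithm~\ref{alg:aanrijdenEigenAna}. Where you genuinely depart from the paper is in how the form-optimality is justified. The paper argues informally (``if we would not do one of these three things, we would have to decelerate more as we drive at a high speed longer''), whereas you introduce the identity $\int|a(t)|\,\mathrm{d}t=v_0+v_m-2v_c$ for four-phase trajectories, reducing the problem to maximizing the cruise speed, and then argue that immediate maximal-rate deceleration and latest maximal-rate acceleration minimize the distance covered subject to a given speed floor, so that any admissible competitor is forced to a lower $\inf_t v(t)$ and hence a larger total variation of $v$. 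This buys a genuinely tighter argument than the paper's: since $\int|\tilde a(t)|\,\mathrm{d}t$ is the total variation of $\tilde v$ and is therefore at least $v_0+v_m-2\inf_t \tilde v(t)$ for \emph{every} competitor, your reduction yields a clean lower bound $\int|\tilde a|\,\mathrm{d}t\geq v_0+v_m-2v_c^*$ matching the value attained by the closed-form trajectory. Two small points of precision: you state that the integral ``equals'' $v_0+v_m-2v_{\min}$ for competitors that decelerate at most once, but for the lower bound you only need the inequality, which holds for arbitrary competitors (multiple dips only increase the total variation) and is what actually closes the argument; and the pointwise comparison $|\tilde{x}(t)|\geq|x(t)|$ borrowed from Segment~2 of Lemma~\ref{l:3=5} does not transfer to this objective (it controls $\int|x|$, not $\int|a|$), so your items (a)--(c) should be read as instances of the $v_{\min}$-maximization argument rather than of that template --- read that way, they are sound, including item (c) on the predecessor constraint, since enlarging $t_{\textit{full}}$ up to the predecessor's full-speed time both preserves spacing and increases the admissible $v_c$.
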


\begin{proof}
	We again split the proof in two parts, but now we first prove optimality of the form of the trajectory and then we check the computation of $t_{cruise}$, $t_{\textit{acc}}$ and $t_{\textit{full}}$ in Algorithm~\ref{alg:aanrijdenEigenAna}.
	
	\begin{proofpart}
		The optimal trajectory consists of at most four parts. The last part, from $t_{\textit{full}}$ until $t_f$ is determined in the same way as shown in the proof of Lemma~\ref{l:3=5}.
		
		The first three parts of the trajectory are split in the following way: decelerating (until $t_{cruise}$), cruising at a fixed speed (until $t_{\textit{acc}}$) and accelerating (until $t_{\textit{full}}$), where the first and last period may have zero length. We want to minimize the area under the absolute value of the acceleration-time graph. We decelerate as early as possible and accelerate as late as possible, and both at the maximum rate. If we would not do one of these three things, it means that we would have to decelerate more as we drive at a high speed longer (and as e.g. the average speed is fixed, namely $x_{0}/t_f$, we would have to decelerate more to a lower speed). So, indeed the first three parts of a trajectory consist of decelerating at maximum rate, then cruising at a fixed (and relatively low) speed and then accelerating at maximum rate.
	\end{proofpart}
	\begin{proofpart}
		As argued in the proof of Lemma~\ref{l:3=5}, the time $t_{\textit{full}}$ is determined by the trajectory $y$ of the predecessor of $V$ and is fixed. So $t_{\textit{full}}$ is chosen as in lines~\ref{algAnl:3}-\ref{alglAn:7}.
		
		Knowing this, we can compute the remainder of the trajectory. We can compute the traversed distance if we immediately decelerate for a time $t$ and accelerate as late as possible for a time $t+v_m/a_m-v_{0}/a_m$ (because it might be that $v_0\neq v_m$), which is
		 \begin{equation}
		\begin{aligned} &v_{0}t-\frac{1}{2}a_mt^2+\left(v_m-a_m(t+\frac{v_m}{a_m}-\frac{v_{0}}{a_m})\right)\left(t+\frac{v_m}{a_m}-\frac{v_{0}}{a_m}\right)+(t_{f}-t_{\textit{full}})v_m +\\&\left(v_m-a_m(t+\frac{v_m}{a_m}-\frac{v_{0}}{a_m})\right)
		\left(t_f-2t-\frac{v_m}{a_m}+\frac{v_{0}}{a_m}\right)+\frac{1}{2}a_m\left(t+\frac{v_m}{a_m}-\frac{v_{0}}{a_m}\right)^2. \label{eq:findT}\end{aligned}\end{equation}
		Equating \eqref{eq:findT} with $|x_0|$ and solving for $t$, results in two positive values. The smaller one is given as $t_1$ in \eqref{eq:cruise} and the larger one as $t_2$ in \eqref{eq:tac}. So we can put $t_{cruise}=t_1$ and $t_{\textit{acc}} =t_2$.
		
		Then combining the defined times, we obtain \eqref{eq:acceOwn}. With this choice of times, we see that we minimize the area under the absolute value of the acceleration-time graph. This is exactly the same criterion as we optimize for in Algorithm~\ref{alg:aanrijdenEigen}, so the two algorithms yield the same trajectory.
	\end{proofpart}
\end{proof}

\begin{remark}\label{rem:solv}
Algorithms \ref{alg:aanrijdenEigen}  and \ref{alg:aanrijdenEigenAna}  are solvable, if the PFA is regular, the control region is sufficiently big and the cars are sufficiently far apart from each other when entering the control region (as mentioned before). The regularity of the PFA ensures that the vehicles keep driving behind each other (and do not have to overtake). Our closed-form expressions in Algorithm \ref{alg:aanrijdenEigenAna} provide immediate quantitative insight in the conditions required for solvability.
In this case, lines 2 to 6 are sufficient to determine the influence of the predecessor of the vehicle that we are currently planning. The sufficiently big control region ensures that proper $t_{\textit{full}}$, $t_1$ and $t_2$ can be found, in such a way that vehicles do not collide, which is also the case for the lost condition on the distance between cars when they enter the control region. A full proof would be similar to the proof of Lemma (IV.4) in \cite{miculescu2016polling} and would follow along the same lines.
\end{remark}

 A visualization of trajectories generated by Algorithms~\ref{alg:aanrijdenAna} and~\ref{alg:aanrijdenEigenAna} is depicted in Figure~\ref{f:trajAccExMinL}.

\begin{figure}
	\centering	\includegraphics[width=0.7\linewidth]{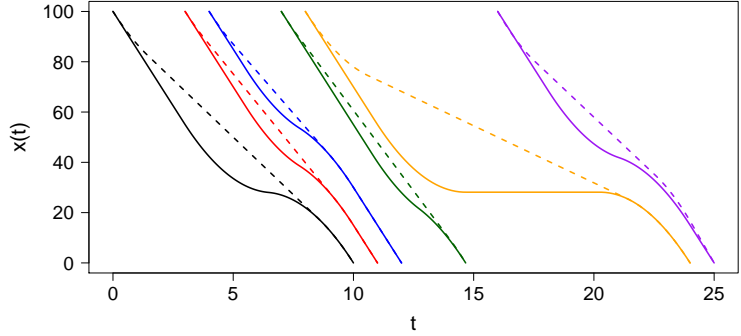}
	\caption{Algorithm~\ref{alg:aanrijdenAna} (solid lines) and Algorithm~\ref{alg:aanrijdenEigenAna} (dashed lines) for several vehicles with $t$ (s) on the horizontal axis and $|x(t)|$ (m) on the vertical axis for several vehicles.}
	\label{f:trajAccExMinL}
\end{figure}
\section{Performance Analysis}\label{h:analysis}

Having covered the two main ingredients of the model, we turn to the performance analysis. The two measures that we consider are mean delay and fairness. In order to obtain results on mean delay and fairness, we first establish a link between the model we described so far and polling models.

\subsection{Polling Model} \label{h:polling}

Polling models are well-studied mathematical objects representing queueing models with multiple queues sharing a single server. For an overview of applications we refer to~\cite{boon2011applications} and for an overview of commonly used methods we refer to~\cite{vishnevskii2006mathematical}.

A general polling model has $n$ queues, each with a distinct arrival process (usually a Poisson process) with parameter $\lambda_i$, which are assumed to be independent from each other. Each queue has its own generally distributed service time from which is sampled independently. A single server is visiting each of the $n$ queues in a certain (possibly random) order to serve customers. After a certain period at a queue the server switches to the next queue. We assume that this switching takes zero time. Instead, we assume that if we switch to a queue that is non-empty, a setup is performed. Otherwise, we do not perform a setup and continue immediately to the next queue (see e.g.~\citealt{singh2002exact}). When all queues were empty before the arrival of a vehicle, we assume that a setup was started at the most recent departure epoch. This is a feature that has not been studied before in the polling literature, but that naturally represents the behaviour of our PFAs.

We will analyze the performance of PFAs regarding delay through polling models. Although we take a vertical queueing approach in those polling models (i.e. the vehicles are all stopped at the stop line at the intersection, occupying no space), the intersection control algorithm provides a one-to-one relation between the vertical queueing model and the PFAs. We visualize this in Figure~\ref{f:link}, where the black line represents a self-driving vehicle, and the red dotted line represents the corresponding `vehicle' in the vertical queueing model. Both `vehicles' enter the control region at the same time (so also the earliest possible arrival time at the intersection is the same for both). They also have the same service time, because as soon as the vehicles start to cross the intersection they have the same trajectory. So the delay for both vehicles, the difference between earliest possible crossing time and actual crossing time, is the same, as visualized in Figure~\ref{f:link}.

\begin{figure}[h!]
	\centering
	\includegraphics[width = 0.7\linewidth]{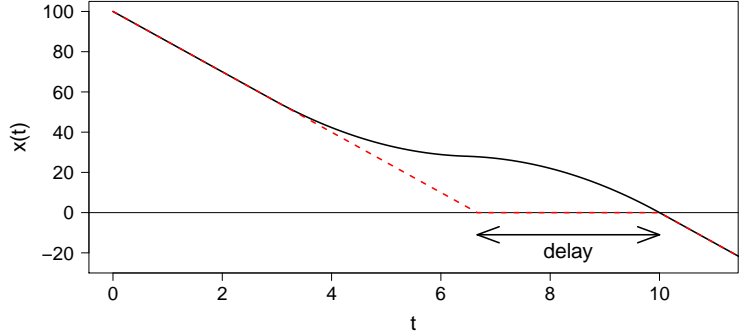}
	\caption{Visualization of the link between the traffic model with PFAs and polling models. The black line represents a self-driving vehicle, and the red dotted line represents the corresponding `vehicle' in the vertical queueing model.}
	\label{f:link}
\end{figure}

To make the connection between the traffic model and polling models more explicit, we argue how the traffic model translates to a polling model. The time $B$ in between vehicles from the same stream accessing the intersection is the service time in the polling model, whereas the clearance time $S$ is the setup time in the polling model. Which queue or lane is to be served is decided upon by the service discipline, respectively the PFA. 

So, our intersection model precisely fits the framework of polling models. We will use the ideas and results already obtained for polling models to give a sound analysis of the traffic model discussed so far. From now on in this section, we will be focusing on the polling model and related results, therefore using queueing terminology.

\subsection{Mean Delay}

The specific assumptions result in a polling model that does not fall into the standard framework, and a fully analytical solution is difficult (if not impossible) to derive. So, we focus on approximations, being much faster and still accurate, and refrain from providing an analytical solution.

We focus on obtaining approximations for the mean delay that still require some analytical results, but that are easier to derive than the full distribution of the delay. We start with a definition of delay. The delay $D_i$ at lane $i$ is defined as the actual time of a car crossing the intersection minus the free-flow time in which a car could cross the intersection. $B_i$ denotes the service time of queue $i$, whereas $S_i$ denotes the setup time when we arrive at queue $i$. We have Poisson arrivals with rate $\lambda_i$ and define $\rho_i=\lambda_i\E[B_i]$ and $\rho=\sum_{i}\rho_i$, where $\rho$ is similar to the vehicle-to-capacity ratio. The approximations that we propose for the mean delay are all of the form,
\begin{equation}\label{eq:generalApproximation}
	\E[D_{i,app}^{P}]=\frac{K_{0,i}^{P}+K_{1,i}^{P}\rho+K_{2,i}^{P}\rho^2}{1-\rho},
\end{equation}
like in~\cite{boon2011closed}, where $K_{j,i}^{P}$ are constants that are yet to be determined and $P$ denotes the PFA. The constants, that might depend on $P$ and the arrival distribution (due to space limitations we only consider Poisson arrivals), are derived through requiring \eqref{eq:generalApproximation} to be exact in various limiting cases. These three cases are the following: \eqref{eq:generalApproximation} should match the mean delay for queue $i$ in the light-traffic limit, the derivative of the light-traffic limit and the heavy-traffic limit. Then we have a system of three equations with three unknowns, which we can solve to find the constants $K_{j,i}^P$. These approximations are based on the framework described in~\cite{boon2011closed}, which is in turn based on ideas developed in~\cite{reiman1988interpolation}. Note that \eqref{eq:generalApproximation} is only valid for $\rho < 1$, which is the condition for the polling model (and therefore also for our PFAs) to be stable.

We start with deriving the light-traffic limit for general service time and setup time distributions for the mean delay. The light-traffic here corresponds to the case where $$\P(\text{server not working and not setting up}) \uparrow 1,$$ which means that both $\lambda_i\E[B_i]$ and $\lambda_jE[S_i]$ should be close to zero. We denote with $X_{i}^{res}$ the residual or overshoot of the random variable $X$ with mean $\mathbb{E}[X^{res}]=\mathbb{E}[X^2]/(2\mathbb{E}[X]$).  Then we have the following lemma.

\begin{lemma}\label{t:lt_d}
The light-traffic limit for the mean delay, up to and including first-order terms, for all discussed PFAs, satisfies
\begin{equation}\label{eq:lt_d}
\E[D_{i}^{LT}] = \rho_i \E[B_{i}^{res}] +\sum_{j\neq i} \rho_j(\E[B_{j}^{res}]+\E[S_{i}]) +	\sum_{j\neq i} \lambda_j \E[S_i]\E[S_{i}^{res}].
\end{equation}
\end{lemma}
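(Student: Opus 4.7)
The plan is to apply a Palm / light-traffic expansion of the mean delay at queue $i$: tag an arriving $i$-customer, condition on the state of the system at its arrival epoch, and retain only terms of order $\lambda$, discarding all $O(\lambda^2)$ contributions. Since in light traffic the probability of finding two or more other ``events'' in the system at the tagged arrival is $O(\lambda^2)$, it suffices to condition on the single most recent arrival preceding the tagged customer. This already explains why one expression covers all PFAs uniformly: whenever at most one other vehicle is present, none of the PFAs has a non-trivial scheduling decision to make, so the PFA cannot contribute to the first-order term.

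Next I would condition on the most recent arrival before the tagged $i$-customer, with age $u$ and queue $j$, whose Palm density equals $\lambda_j e^{-\lambda u} = \lambda_j + O(\lambda^2)$, and integrate. Three scenarios contribute. If $j = i$ and $0 < u < B_i$, the preceding $i$-customer is still being served and the tagged delay equals the residual service $B_i - u$; by Fubini and after taking expectations this yields $\lambda_i \E[B_i^2/2] = \rho_i \E[B_i^{res}]$. If $j \neq i$ and $0 < u < B_j$, the $j$-customer is still being served and the server must additionally set up for queue $i$, giving delay $(B_j - u) + S_i$ and integrated contribution $\rho_j \E[B_j^{res}] + \rho_j \E[S_i]$. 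If $j \neq i$ and $B_j < u < B_j + S_i$, the $j$-customer has departed and the phantom setup for queue $i$ (introduced in Section~\ref{h:polling}) is still in progress; its residual integrates to $\lambda_j \E[S_i]\E[S_i^{res}]$ via $\E[S_i^{res}] = \E[S_i^2]/(2\E[S_i])$. The complementary case $j = i$ with $u > B_i$ contributes nothing, since then no queue switch takes place and the setup convention does not trigger a phantom setup. Summing the three contributions over $j$ reproduces \eqref{eq:lt_d} exactly.

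The main technical obstacle is the bookkeeping around the phantom-setup convention of Section~\ref{h:polling}: one must argue carefully that a departure followed by an arrival at the same queue triggers no setup, whereas a switch $j \to i$ retroactively initiates a setup of length $S_i$ starting at the previous departure epoch. A secondary subtlety is verifying that all neglected contributions—multiple overlapping prior arrivals, discipline-specific reorderings from later same-queue arrivals joining an ongoing platoon (as permitted by the exhaustive PFA), and remnants of earlier cycles—are genuinely $O(\lambda^2)$ under each of the PFAs in question, so that they do not leak into the first-order term.
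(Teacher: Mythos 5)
Your proposal is correct and follows essentially the same route as the paper: a first-order light-traffic expansion that conditions on the single event possibly in progress at the tagged arrival (a residual $i$-service, a residual $j$-service plus the setup $S_i$, or a residual phantom setup triggered by a recent $j$-departure), with all multi-customer configurations absorbed into $O(\lambda^2)$. The paper phrases this via PASTA and the probabilities of arriving in each visit/setup phase rather than via the Palm density of the most recent arrival, but the case decomposition and the resulting three terms are identical.
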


\begin{proof}
	We consider what happens in each phase of the cycle and argue what the waiting time is of a customer arriving at queue $i$.
	
	We have $n$ different visit periods, numbered $j=1,...,n$. If $j=i$, we only have to wait for a residual service time of the customer that is currently in service (using the PASTA property of Poisson arrivals). This happens with probability $\lambda_i\E[B_i]=\rho_i$. The contribution to the waiting time is thus $\rho_i\E[B_{i}^{res}]$. If $i\neq j$, we have to wait for the residual service time of the customer that is in service and for the setup time to our own queue $i$. This all happens with probability $\lambda_j\E[B_j]=\rho_j$, so the contribution to the waiting time is $\rho_j(\E[B_{j}^{res}]+\E[S_i])$.
	
	The setup periods: we again have $j=1,...,n$. The case $i=j$ does not occur, as we do not have a setup time in that case (we take the customer immediately into service). The cases $i\neq j$, occur with rate $\lambda_j\E[S_i]$ (which converges to zero) and if we arrive during such a period, we have to wait for a residual setup time. So the contribution is $\lambda_j\E[S_i]\E[S_{i}^{res}]$.

	Cases where we see more than one customer when we arrive in the system are all of order $\mathcal{O}(\rho^2)$ or higher, so we do not consider those terms.
	
	Summing all possibilities, we arrive at \eqref{eq:lt_d}.
\end{proof}


The heavy-traffic limit of the mean delay does depend on the PFA. In heavy traffic, the behaviour of our PFAs and regular polling models is the same. Consequently, the heavy-traffic limits for the exhaustive and gated PFAs are the same as the heavy-traffic limits for the exhaustive and gated disciplines in e.g.~\cite{boon2011polling}, where polling models with switch-over times (rather than setup times) are presented. Indeed, if the lengths of the setups and switch-overs are the same, the polling model with switch-overs (and without setup times) is the same as the polling model with setup times (but no switch-over times), because each setup will be performed in heavy traffic (as all queues are non-empty when the server visits them) and can be seen as an `ordinary' switch-over time. This implies that we can use the results from~\cite{boon2011polling}, so
\begin{equation}\label{eq:ht_w}
\E[D_{i}^{HT,P}]=\frac{\omega_{i}^P}{1-\rho}+o((1-\rho)^{-1}),
\end{equation}
with $P$ denoting the PFA, where
\begin{equation}\label{eq:omegaExh}
\omega_{i}^{exh} = \frac{1-\hat{\rho}_i}{2} \left(\frac{\sigma^2}{\sum_{j=1}^n \hat{\rho}_j(1-\hat{\rho}_j)}+\sum_{j=1}^n \E[S_j]\right),
\end{equation}
for the exhaustive PFA, with  $\sigma^2 = \E[B^2]/\E[B]$ (in case of Poisson arrivals) and $\hat{\rho}_i = \rho_i/\rho$ and
\begin{equation}\label{eq:omegaGat}
\omega_{i}^{gat} = \frac{1+\hat{\rho}_i}{2} \left(\frac{\sigma^2}{\sum_{j=1}^n \hat{\rho}_j(1+\hat{\rho}_j)}+\sum_{j=1}^n \E[S_j]\right)
\end{equation}
for the gated PFA.

The general approximation in \eqref{eq:generalApproximation} is now ready to be used. We obtain the following theorem.
\begin{theorem}
	The mean delay experienced for PFA $P$ can be approximated with Equation \eqref{eq:generalApproximation}, where
	\begin{align}
		&K_{0,i}^P = 0, \nonumber\\
		&K_{1,i}^P = \hat{\rho}_i \E[B_{i}^{res}] +\sum_{j\neq i} \hat{\rho}_j(\E[B_{j}^{res}]+\E[S_{i}]) +	\sum_{j\neq i} \hat{\lambda}_j\E[S_{i}^{res}] \E[S_i], \label{eq:constants}\\\nonumber
		&K_{2,i}^P = \omega_{i}^P - K_{1,i}^P,
	\end{align}
	with $\hat{\lambda}_i=\hat{\rho}_i/\E[B_i]$.
\end{theorem}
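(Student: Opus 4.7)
The plan is to derive the three constants $K_{0,i}^P$, $K_{1,i}^P$, $K_{2,i}^P$ by matching the rational function \eqref{eq:generalApproximation} to three asymptotic quantities that are already available to us: the value of the mean delay at $\rho=0$, the first derivative at $\rho=0$ (equivalently, the $O(\rho)$ coefficient of Lemma~\ref{t:lt_d}), and the leading heavy-traffic term \eqref{eq:ht_w}. This is exactly the interpolation recipe of \cite{boon2011closed} and \cite{reiman1988interpolation}, so the proof reduces to carrying out these matches.

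First I would observe that in light traffic the arrival rates scale as $\lambda_i=\hat{\lambda}_i\rho$ and $\rho_i=\hat{\rho}_i\rho$, and hence \eqref{eq:lt_d} can be rewritten as
\begin{equation*}
\E[D_{i}^{LT}] = \rho\Bigl(\hat{\rho}_i \E[B_{i}^{res}] +\sum_{j\neq i} \hat{\rho}_j(\E[B_{j}^{res}]+\E[S_{i}]) +\sum_{j\neq i} \hat{\lambda}_j \E[S_{i}^{res}]\E[S_i]\Bigr) + o(\rho),
\end{equation*}
so the zeroth-order term is $0$ and the first-order coefficient in $\rho$ is exactly the quantity denoted $K_{1,i}^P$ in \eqref{eq:constants}. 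Evaluating the approximation \eqref{eq:generalApproximation} at $\rho=0$ gives $K_{0,i}^P$, which must therefore equal $0$; Taylor-expanding
\begin{equation*}
\frac{K_{1,i}^P\rho+K_{2,i}^P\rho^2}{1-\rho}=K_{1,i}^P\rho+(K_{1,i}^P+K_{2,i}^P)\rho^2+O(\rho^3)
\end{equation*}
and matching the $\rho$-coefficient to the expression above pins down $K_{1,i}^P$.

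Finally, I would let $\rho\uparrow 1$ in \eqref{eq:generalApproximation}. The approximation then behaves like $(K_{0,i}^P+K_{1,i}^P+K_{2,i}^P)/(1-\rho)$, which must coincide with the leading heavy-traffic term $\omega_i^P/(1-\rho)$ recorded in \eqref{eq:ht_w} (with $\omega_i^P$ given by \eqref{eq:omegaExh} and \eqref{eq:omegaGat} for the exhaustive and gated PFAs respectively, a correspondence justified in the paragraph preceding \eqref{eq:ht_w} by the fact that in heavy traffic every setup is actually performed). Substituting the already-determined $K_{0,i}^P=0$ and $K_{1,i}^P$ then forces $K_{2,i}^P=\omega_i^P-K_{1,i}^P$, completing the derivation.

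There is no genuine obstacle here since the three asymptotic inputs have all been established earlier in the paper; the only mild subtlety is making sure the light-traffic expansion in Lemma~\ref{t:lt_d} is indeed expanded in the correct small parameter $\rho$ (with the $\hat{\rho}_j$ and $\hat{\lambda}_j$ held fixed as the relative load proportions), which is precisely the scaling under which the heavy-traffic limit in \eqref{eq:ht_w} is also stated. Once that bookkeeping is in place, the three equations are linear in $K_{0,i}^P,K_{1,i}^P,K_{2,i}^P$ and yield \eqref{eq:constants} directly.
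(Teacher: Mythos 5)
Your proposal follows essentially the same route as the paper: the three interpolation conditions (value at $\rho=0$, derivative at $\rho=0$ matched to the light-traffic expansion of Lemma~\ref{t:lt_d} under the scaling $\rho_i=\hat{\rho}_i\rho$, and the heavy-traffic coefficient from \eqref{eq:ht_w}) yield a linear system whose solution is \eqref{eq:constants}. Your explicit Taylor expansion and remark on keeping $\hat{\rho}_j$, $\hat{\lambda}_j$ fixed merely spell out bookkeeping the paper leaves implicit; the argument is correct and equivalent.
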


\begin{proof}
As mentioned before, we put three conditions on the constants $K_{j,i}^P$, $j=0,1,2$. These are the following
\begin{equation*}
\begin{aligned}
\E[D_{i,app}^{P}]\Big|_{\rho=0} &= \E[D_{i}^{LT}]\Big|_{\rho = 0},\\
\frac{d}{d\rho}\E[D_{i,app}^{P}]\Big|_{\rho=0} &= \frac{d}{d\rho}\E[D_{i}^{LT}]\Big|_{\rho = 0},\\
(1-\rho)\E[D_{i,app}^{P}]\Big|_{\rho\uparrow 1} &= \E[D_{i}^{HT,P}].
\end{aligned}
\end{equation*}

Using Lemma~\ref{t:lt_d} and Equation \eqref{eq:ht_w},

\begin{equation}\label{eq:setEquations}
\begin{aligned}
&K_{0,i}^P =  0,\\
&K_{0,i}^P+K_{1,i}^P = \hat{\rho}_i \E[B_{i}^{res}] +\sum_{j\neq i} \hat{\rho}_j(\E[B_{j}^{res}]+\E[S_{i}]) +	\sum_{j\neq i} \hat{\lambda}_j\E[S_{i}^{res}] \E[S_i],\\
&K_{0,i}^P+K_{1,i}^P+K_{2,i}^P = \E[D_{i}^{HT,P}]=\omega_{i}^P.
\end{aligned}
\end{equation}

It can easily be seen that \eqref{eq:setEquations} reduces to \eqref{eq:constants}.
\end{proof}

\begin{remark}
	The above mentioned results for mean delay can readily be extended to results for the mean number of vehicles in the queue, using Little's law. Together with the speed regulation algorithm, the physical length of the queue can be calculated (for example if we define the last vehicle that has already decelerated to be in the queue). This would give information about e.g. spillback of the intersection to other intersections.
\end{remark}

In general the approximations work fine for all discussed PFAs, as can be seen in Figure~\ref{f:meanDelay} (comparing the solid lines (the exact results) and the dashed lines (the approximations)). We present examples where we put $v_m=15$ m/s, $a_m=4$ m/s$^2$, $l=5$ m and $s=10$ m and where two lanes cross each other. We consider two cases where the load on both lanes is split differently: one case where $\rho_1=\rho_2$ (referred to as being symmetric) and one case where $\rho_1=3\rho_2$ (referred to as being asymmetric). Following~\cite{tachet2016revisiting}, we put $B=1$ s and $S=2.375$ s.  The two discussed PFAs result in the Figure~\ref{f:meanDelay}, where also, as a benchmark, the Batch Algorithm from~\cite{tachet2016revisiting} is considered, with a maximum batch size of 100. The approximations are also good for all other settings we simulated.

\begin{figure}[h!]
	\centering
	\includegraphics[width = 0.7\linewidth]{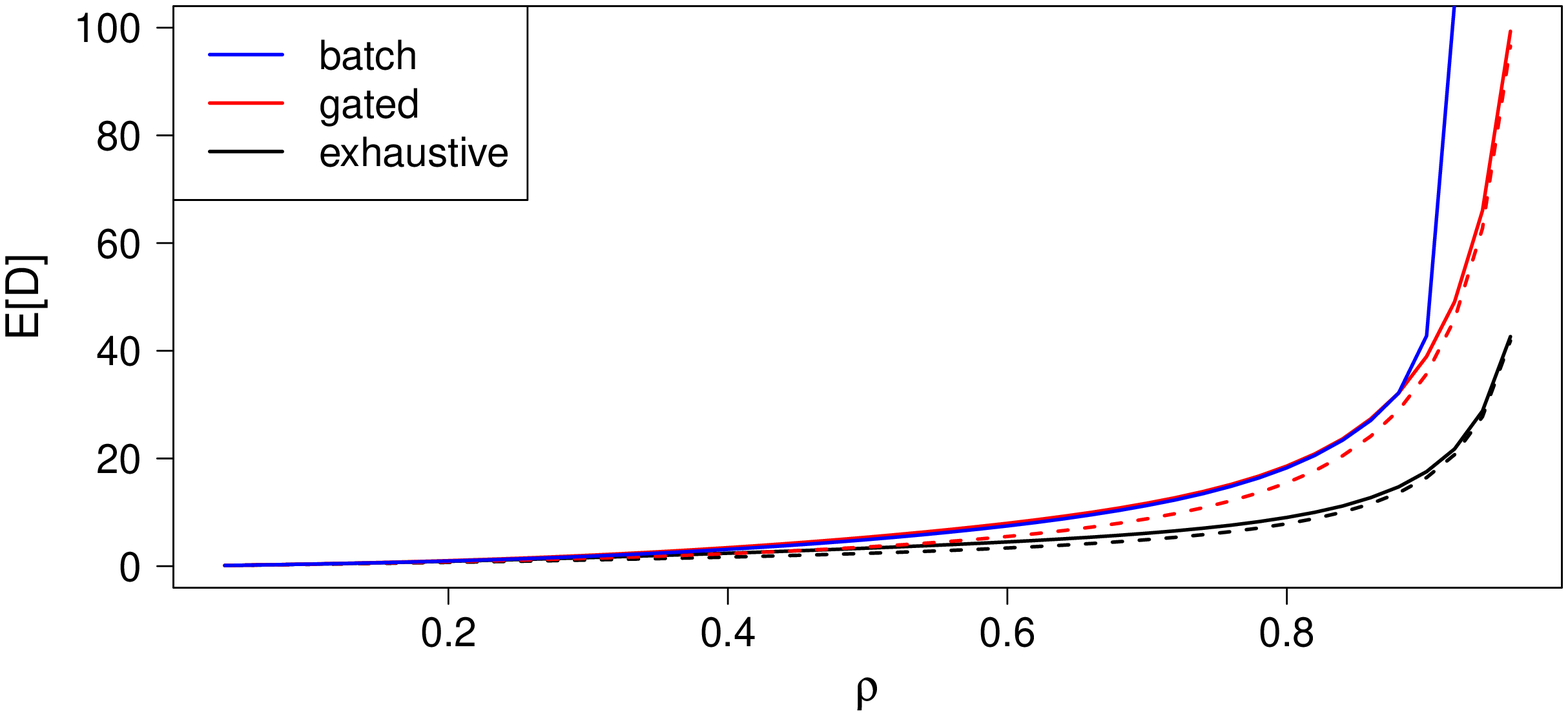}
	\vspace{0.5 cm}
	\includegraphics[width = 0.7\linewidth]{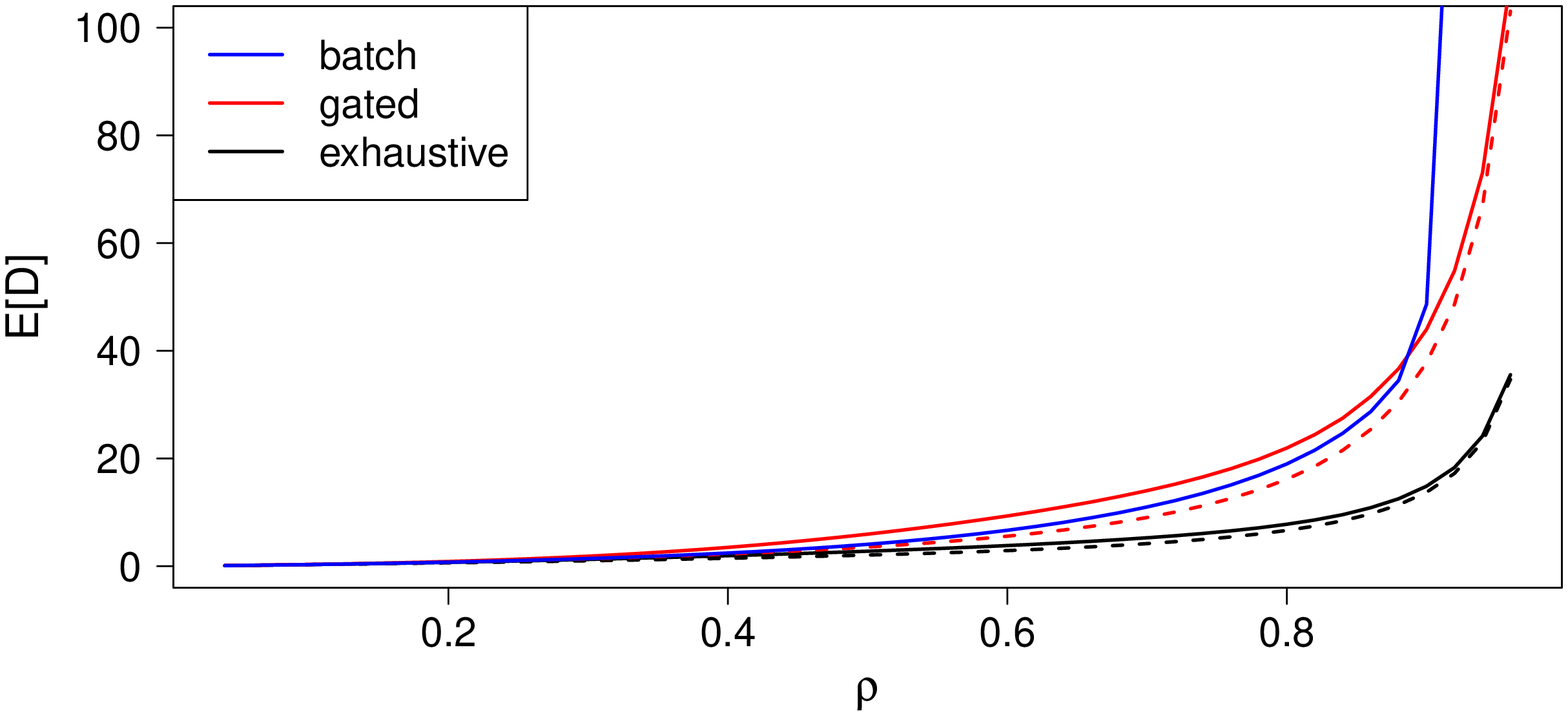}
	\caption{Mean delay experienced by an arbitrary car for the symmetric case (top) and asymmetric case (bottom). The solid lines represent simulation results and the dashed lines approximations.}
	\label{f:meanDelay}
\end{figure}
We see that the exhaustive PFA performs really well, if we focus on mean delay, compared to the other PFAs. This can also be understood from the heavy-traffic limits \eqref{eq:omegaExh} and \eqref{eq:omegaGat}. The performance of the Batch Algorithm is similar to that of the gated PFA, except for higher values of $\rho$, which is due to  the maximum batch size of 100. This maximum batch size causes a lower maximum capacity for the Batch Algorithm than for the exhaustive and gated PFA and therefore, the Batch Algorithm has a sharp increase in mean delay earlier than the other two PFAs. We expect the exhaustive PFA to be (very close to) optimal with respect to the mean delay. This optimality was, to some extent, already observed in e.g.~\cite{newell1969properties},~\cite{levy1990dominance} and~\cite{wu2013mathematical}.

\subsection{Fairness}

In order to show that the exhaustive PFA is not the best for all performance metrics we consider fairness in this subsection. We use the definition of fairness for polling models, denoted with $F$, as introduced in~\cite{shapira2015fairness},
\begin{equation*}
F = \frac{\E[N_{ahead}]}{\E[N_{total}]},
\end{equation*}
where $N_{ahead}$ denotes the number of cars an arbitrary car sees upon arrival and that are served ahead of it; and where $N_{total}$ denotes the total number of cars an arbitrary car sees upon arrival. In words this means that we quantify the percentage of cars that did not overtake an arbitrary car (on an intersection-wide basis).

We present simulation results for fairness for the same set of examples as for the mean delay.
\begin{figure}[h!]
	\centering
	\includegraphics[width = 0.7\linewidth]{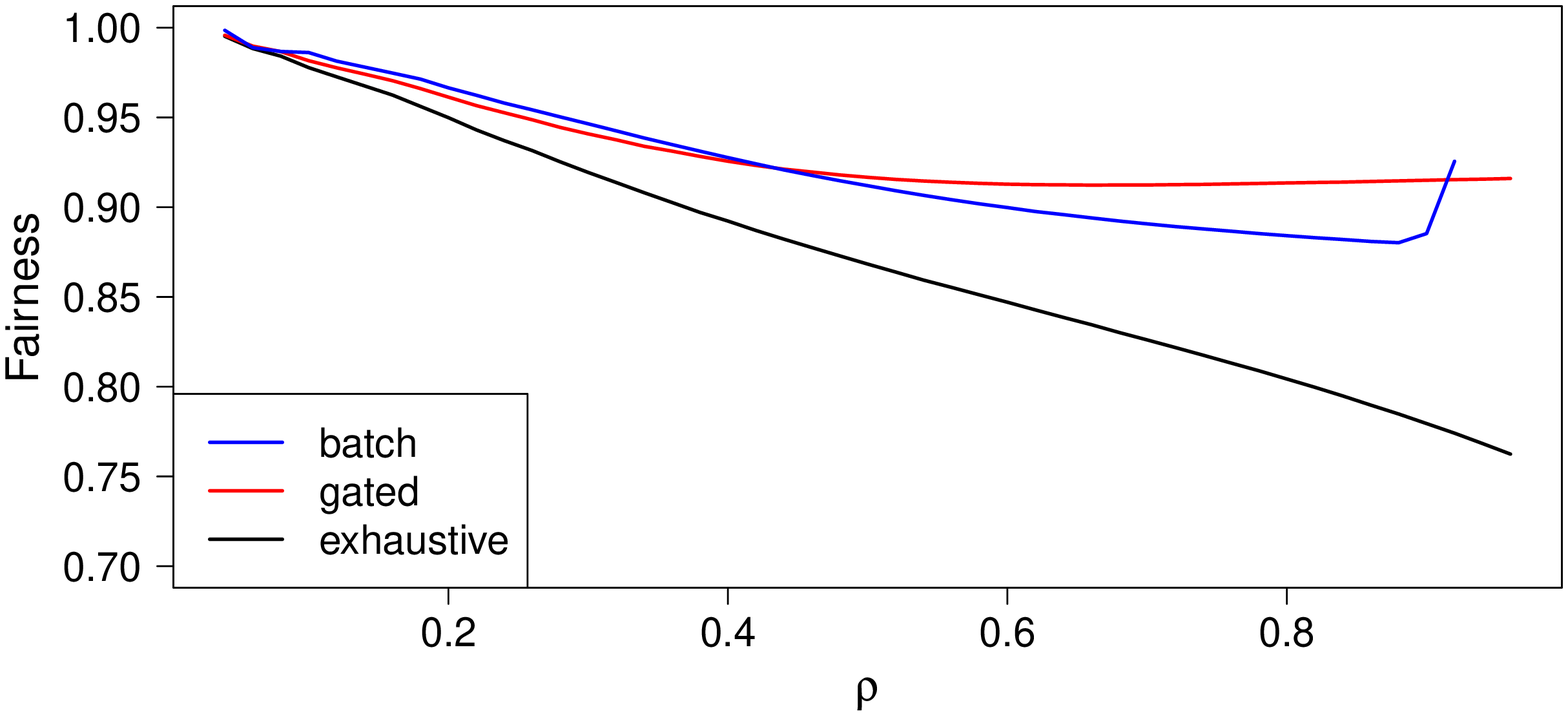}
	\vspace{0.5 cm}
	\includegraphics[width = 0.7\linewidth]{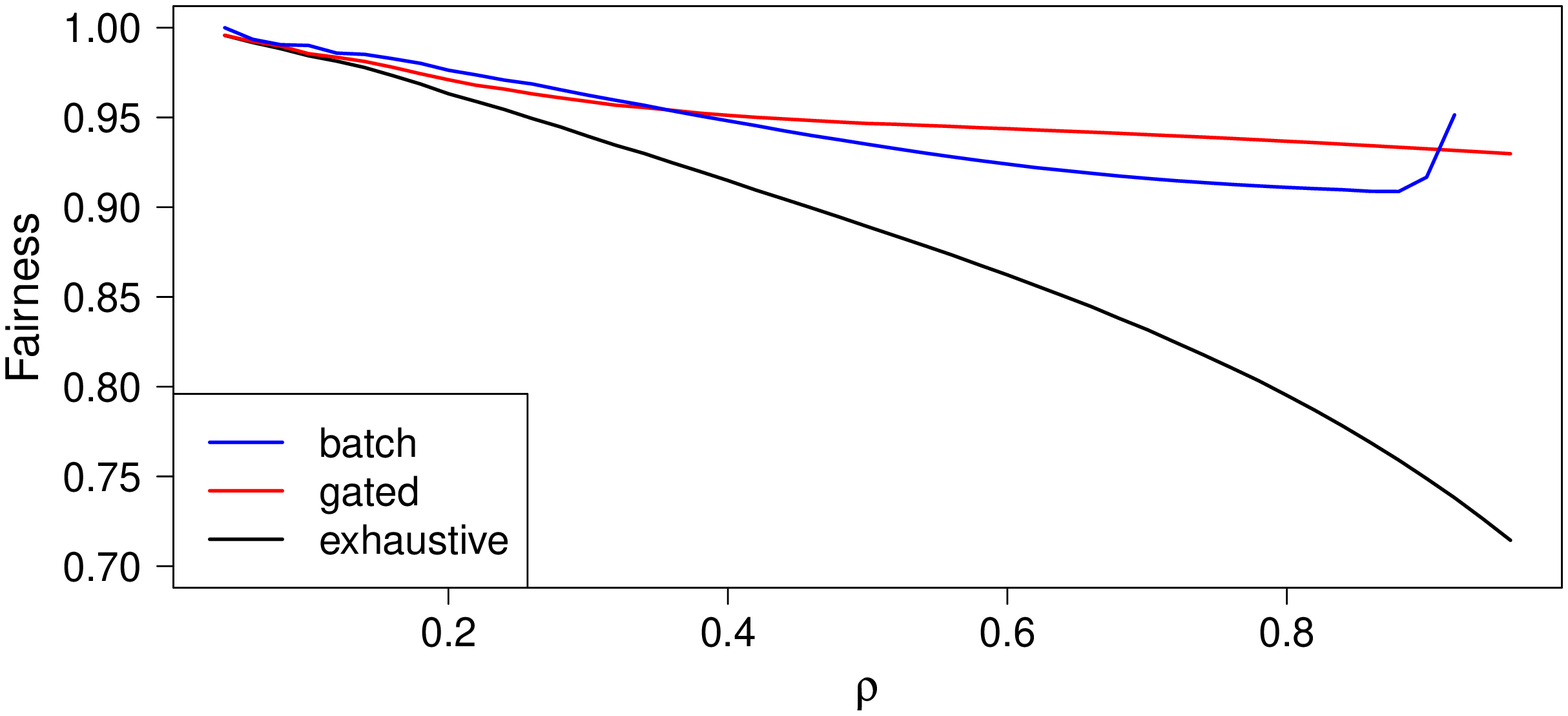}
	\caption{Fairness experienced by an arbitrary car for the symmetric case (top) and asymmetric case (bottom).}
	\label{f:fairness}
\end{figure}

Considering fairness, we see once more that the gated PFA is close to the Batch Algorithm for values of $\rho$ that are not too high. The increase of fairness for high values of $\rho$ for the Batch Algorithm is due to the maximum batch size of 100. The exhaustive PFA is worse on fairness, but is still above 75\%. It seems that a low mean delay results in a relatively low fairness, showing a potential need to balance the two performance measures, which is to some extent visible in the increase of fairness for the Batch Algorithm and high values of $\rho$.

\section{Comparison traditional Traffic Light and PFAs} \label{h:comparison}

The goal of this section is to provide a comparison between traditional traffic lights and PFAs on basis of delay. As a measure for the traditional traffic light we use the traffic simulator SUMO. We will consider two scenarios in SUMO: one with fixed control and one with adaptive control (based on so-called time loss in the SUMO User Documentation). We will compare these two scenarios with the exhaustive PFA.

We again consider two examples where two lanes cross each other and the vehicle to capacity ratio is in the first example the same on both lanes, whereas in the second example the ratio between the loads on the lanes is $1:3$. For the exhaustive PFA we again put $B=1$ s and $S=2.375$ s. For the fixed control simulation in SUMO and the first example we assume a green period for both lanes of $22$ s and an amber period of $3$ s; for the second example we pick green periods of $11$ and $33$ s and amber periods of length $3$ s. Note that some of the results for the fixed control in Figure~\ref{f:reeel} could be slightly improved by adapting the length of the green period. For the adaptive control in SUMO we assume a maximum green period duration of $45$ s and an amber period of $3$ s for the symmetric example and a maximum green of $22$ and $68$ s for the asymmetric case. Note that we do not have to define the variable $B$ in SUMO, as the vehicles themselves will decide what $B$ is. The delay in SUMO for the fixed and adaptive control is obtained in the following way: we compute the mean time spent in the system for all vehicles and subtract the mean time vehicles spent in the system under free-flow conditions (i.e. putting the traffic light at green for one lane all the time). We take exactly the same arrivals for all three scenarios.

\begin{figure}[h!]
	\centering
	\includegraphics[width = 0.7\linewidth]{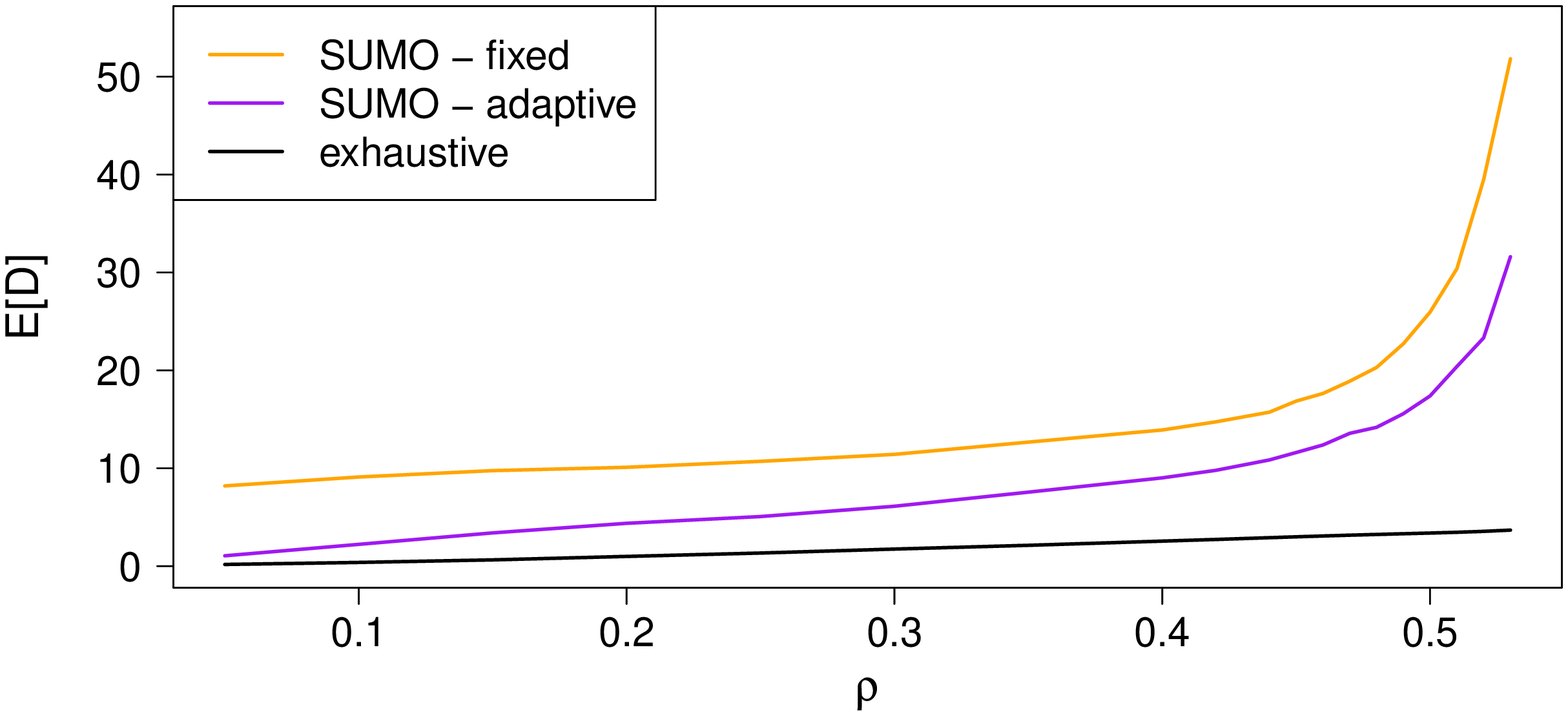}
	\vspace{0.5 cm}
	\includegraphics[width = 0.7\linewidth]{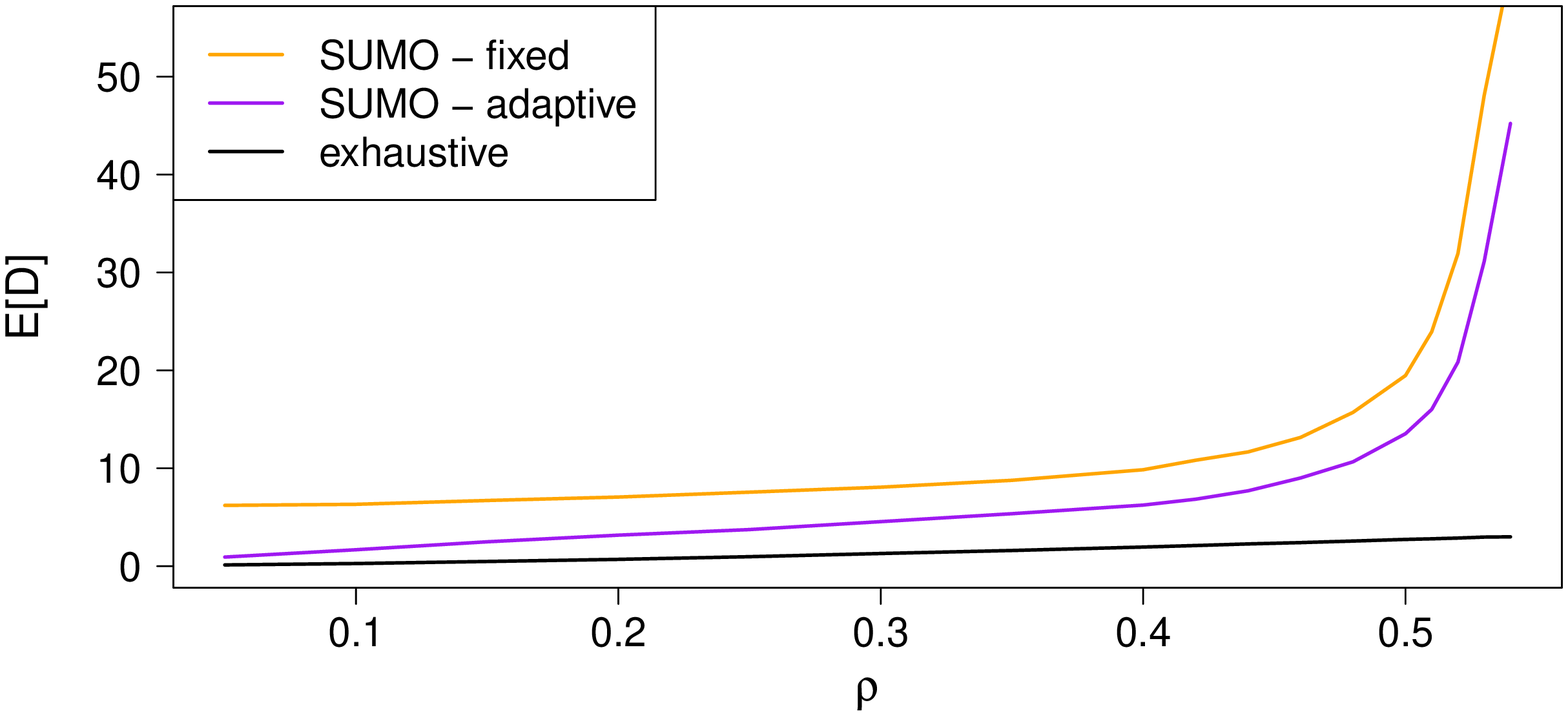}
	\caption{Mean delay for an arbitrary car for traditional traffic lights (represented by SUMO) and the exhaustive PFA for the symmetric example (top) and the asymmetric case (bottom).}
	\label{f:reeel}
\end{figure}

We see in Figure~\ref{f:reeel} that there is quite a difference between either the fixed cycle traffic light or the adaptive traffic light, and the exhaustive PFA. To some extent, this was also observed in~\cite{tachet2016revisiting}. The capacity of the intersection for the latter case is almost twice as high as for the traditional traffic light, showing a huge potential in resolving congestion. This is mainly due to the speed regulation of vehicles, which increases the speed of vehicles crossing the intersection, but also due to the scheduling strategies of the PFA.

\section{Conclusion and Discussion} \label{h:conclusion}

We have shown that significant gains can be obtained compared to nowadays traffic when speed regulation and PFAs can be employed and have given ways to decrease mean delay on intersections. This has been shown through a connection between polling models and PFAs.

It seems that the exhaustive PFA is close to optimality with respect to mean delay. However, the exhaustive PFA exhibits relatively poor fairness characteristics. It might be worthwhile to find a balance between mean delay and (e.g.) fairness in order to obtain some kind of optimal setting for the PFA. A possibility hereto might be the so-called $k$-limited discipline in polling models, where for each lane an upper bound to the platoon size is set. Intuitively, the $k$-limited discipline is similar to the exhaustive discipline, except for this maximum size of the platoon.

In principle our PFAs could be used in nowadays traffic as well. The only requirement is that it must be known on an intersection wide basis in which order the vehicles arrive. The requirement that we can control the speed of arriving vehicles is not needed to execute the PFAs. This assumption would only play a role in what the variables $B$ and $S$ would look like. But even then, the scheduling part of a PFA might still be used. Using some kind of speed advisory system for conventional vehicles, it might be possible to come close to the performance of the PFAs based on self-driving vehicles.

A future direction of research is to investigate more realistic intersection scenarios, yet we expect similar results. Depending on the extension, our results readily apply, if at most one stream of vehicles is allowed to cross the intersection, or need to be generalized. We also would like to extend our approximations to obtain analytical results for fairness.

We have studied an isolated intersection, where vehicles arrive individually in the control region. In a network of intersections there are several complications. Firstly, the arrival processes of vehicles become dependent. Moreover, the interplay between various intersections is non-trivial. Already for a tandem of fixed cycle traffic light intersections, it is difficult to find a good green wave, see e.g.~\cite{oblakova2017green}. Our PFAs are much less strict on e.g. the cycle length, imposing an even more difficult task of balancing a whole network of intersections. Once more, the $k$-limited PFA (having a fixed maximum cycle length) might prove to be an outcome in this respect.

A study on how realistic our proposed models are, might also be relevant. We assume e.g. that each vehicle is able to perfectly match the criteria we set in the speed regulation assumptions. For example, there might be some uncertainty in the control of a self-driving vehicle. A notion like string-stability of a platoon of vehicles (see e.g.~\citealt{swaroop1996string}) might be investigated for our proposed models.

\section*{Acknowledgments}

We would like to thank Johan van Leeuwaarden, Onno Boxma, Wim van Nifterick and Serge Hoogendoorn for interesting discussions.

\section*{Funding}

This work was supported by NWO under Grant 438-13-206.

\bibliographystyle{tfcad}
\bibliography{fctl}

\begin{thebibliography}{31}
\newcommand{\enquote}[1]{``#1''}
\providecommand{\natexlab}[1]{#1}
\providecommand{\url}[1]{\normalfont{#1}}
\providecommand{\urlprefix}{}

\bibitem[Boon(2011)]{boon2011polling}
Boon, M.A.A. 2011. ``Polling models: from theory to traffic intersections.''
  PhD thesis, Eindhoven University of Technology.

\bibitem[Boon, van~der Mei, and Winands(2011)]{boon2011applications}
Boon, M.A.A., R.D. van~der Mei, and E.M.M. Winands. 2011. ``Applications of
  polling systems.'' \emph{{S}urveys in {O}perations {R}esearch and
  {M}anagement {S}cience} 16 (2): 67--82.

\bibitem[Boon et~al.(2011)]{boon2011closed}
Boon, M.A.A., E.M.M. Winands, I.J.B.F. Adan, and A.C.C. van Wijk. 2011.
  ``Closed-form waiting time approximations for polling systems.''
  \emph{Performance Evaluation} 68 (3): 290--306.

\bibitem[Darroch(1964)]{darroch}
Darroch, J.N. 1964. ``On the traffic light queue.'' \emph{The Annals of
  Mathetical Statistics} 35: 380--388.

\bibitem[Dib et~al.(2014)]{dib2014optimal}
Dib, W., A.~Chasse, P.~Moulin, A.~Sciarretta, and G.~Corde. 2014. ``Optimal
  energy management for an electric vehicle in eco-driving applications.''
  \emph{Control Engineering Practice} 29: 299--307.

\bibitem[Dresner and Stone(2008)]{dresner2008multiagent}
Dresner, K., and P.~Stone. 2008. ``A multiagent approach to autonomous
  intersection management.'' \emph{Journal of Artificial Intelligence Research}
  31: 591--656.

\bibitem[Helbing, Farkas, and Vicsek(2000)]{helbing2000simulating}
Helbing, D., I.~Farkas, and T.~Vicsek. 2000. ``Simulating dynamical features of
  escape panic.'' \emph{Nature} 407: 487--490.

\bibitem[Helbing and Mazloumian(2009)]{helbing2009operation}
Helbing, D., and A.~Mazloumian. 2009. ``Operation regimes and slower-is-faster
  effect in the control of traffic intersections.'' \emph{The European Physical
  Journal B-Condensed Matter and Complex Systems} 70 (2): 257--274.

\bibitem[Kockelkoren(2018)]{kockelkoren2018}
Kockelkoren, L.M.C. 2018. ``{C}entralized merge control for {FLEET}, a material
  handling {AGV} system.'' Master's thesis, {E}indhoven University of
  {T}echnology.

\bibitem[Lawitzky, Wollherr, and Buss(2013)]{lawitzky2013energy}
Lawitzky, A., D.~Wollherr, and M.~Buss. 2013. ``Energy optimal control to
  approach traffic lights.'' In \emph{Intelligent Robots and Systems (IROS),
  2013 IEEE/RSJ International Conference on}, 4382--4387. IEEE.

\bibitem[Levy, Sidi, and Boxma(1990)]{levy1990dominance}
Levy, H., M.~Sidi, and O.J. Boxma. 1990. ``Dominance relations in polling
  systems.'' \emph{Queueing Systems} 6 (1): 155--171.

\bibitem[Liu, Wang, and Hoogendoorn(2019)]{liuwanghoogendoorn2019}
Liu, M., M.~Wang, and S.~Hoogendoorn. 2019. ``Optimal platoon trajectory
  planning approach at arterials.'' \emph{Transportation Research Record}
  \urlprefix\url{https://doi.org/10.1177/0361198119847474}.

\bibitem[Miculescu and Karaman(2014)]{miculescu2014polling}
Miculescu, D., and S.~Karaman. 2014. ``Polling-systems-based control of
  high-performance provably-safe autonomous intersections.'' In \emph{IEEE 53rd
  Annual Conference on Decision and Control (CDC)}, 1417--1423. IEEE.

\bibitem[Miculescu and Karaman(2016)]{miculescu2016polling}
Miculescu, D., and S.~Karaman. 2016. ``Polling-systems-based autonomous vehicle
  coordination in traffic intersections with no traffic signals.'' \emph{arXiv
  preprint:1607.07896} .

\bibitem[Milan{\'e}s et~al.(2010)]{milanes2010controller}
Milan{\'e}s, V., J.~P{\'e}rez, E.~Onieva, and C.~Gonz{\'a}lez. 2010.
  ``Controller for urban intersections based on wireless communications and
  fuzzy logic.'' \emph{IEEE Transactions on Intelligent Transportation Systems}
  11 (1): 243--248.

\bibitem[Newell(1965)]{newell65}
Newell, G.~F. 1965. ``Approximation methods for queues with application to the
  fixed-cycle traffic light.'' \emph{SIAM Review} 7 (2): 223--240.

\bibitem[Newell(1969)]{newell1969properties}
Newell, G.F. 1969. ``Properties of vehicle-actuated signals: I. one-way
  streets.'' \emph{Transportation Science} 3 (1): 30--52.

\bibitem[Oblakova et~al.(2017)]{oblakova2017green}
Oblakova, A., A.~Al~Hanbali, R.J. Boucherie, and J.C.W. van Ommeren. 2017.
  ``Green wave analysis in a tandem of traffic-light intersections.'' .

\bibitem[Papageorgiou et~al.(2003)]{papageorgiou2003review}
Papageorgiou, M., C.~Diakaki, V.~Dinopoulou, A.~Kotsialos, and Y.~Wang. 2003.
  ``Review of road traffic control strategies.'' \emph{Proceedings of the IEEE}
  91 (12): 2043--2067.

\bibitem[Rafaeli, Barron, and Haber(2002)]{rafaeli2002effects}
Rafaeli, A., G.~Barron, and K.~Haber. 2002. ``The effects of queue structure on
  attitudes.'' \emph{Journal of Service Research} 5 (2): 125--139.

\bibitem[Reiman and Simon(1988)]{reiman1988interpolation}
Reiman, M.I., and B.~Simon. 1988. ``An interpolation approximation for queueing
  systems with {P}oisson input.'' \emph{Operations Research} 36 (3): 454--469.

\bibitem[Resing(1993)]{resing1993polling}
Resing, J.A.C. 1993. ``Polling systems and multitype branching processes.''
  \emph{Queueing Systems} 13 (4): 409--426.

\bibitem[Rios-Torres and Malikopoulos(2017)]{rios2017survey}
Rios-Torres, J., and A.A. Malikopoulos. 2017. ``A survey on the coordination of
  connected and automated vehicles at intersections and merging at highway
  on-ramps.'' \emph{IEEE Transactions on Intelligent Transportation Systems} 18
  (5): 1066--1077.

\bibitem[Shapira and Levy(2016)]{shapira2015fairness}
Shapira, G., and H.~Levy. 2016. ``On fairness in polling systems.''
  \emph{Annals of Operations Research}
  \urlprefix\url{https://doi.org/10.1007/s10479-016-2247-8}.

\bibitem[Singh and Srinivasan(2002)]{singh2002exact}
Singh, M.P., and M.M. Srinivasan. 2002. ``Exact analysis of the state-dependent
  polling model.'' \emph{Queueing Systems} 41 (4): 371--399.

\bibitem[Swaroop and Hedrick(1996)]{swaroop1996string}
Swaroop, D., and J.K. Hedrick. 1996. ``String stability of interconnected
  systems.'' \emph{IEEE {T}ransactions on {A}utomatic {C}ontrol} 41 (3):
  349--357.

\bibitem[Tachet et~al.(2016)]{tachet2016revisiting}
Tachet, R., P.~Santi, S.~Sobolevsky, L.~I. Reyes-Castro, E.~Frazzoli,
  D.~Helbing, and C.~Ratti. 2016. ``Revisiting street intersections using
  slot-based systems.'' \emph{PloS {ONE}} 11 (3): e0149607.
  \urlprefix\url{https://doi.org/10.1371/journal.pone.0149607}.

\bibitem[Tanner(1962)]{tanner62}
Tanner, J.~C. 1962. ``A theoretical analysis of delays at an uncontrolled
  intersection.'' \emph{Biometrika} 49 (1/2): 163--170.

\bibitem[van Leeuwaarden(2006)]{fctlsolo}
van Leeuwaarden, J. S.~H. 2006. ``Delay analysis for the fixed-cycle
  traffic-light queue.'' \emph{Transportation Science} 40 (2): 189--199.

\bibitem[Vishnevskii and Semenova(2006)]{vishnevskii2006mathematical}
Vishnevskii, V.M., and O.V. Semenova. 2006. ``Mathematical methods to study the
  polling systems.'' \emph{Automation and Remote Control} 67 (2): 173--220.

\bibitem[Wu, Yan, and Abbas-Turki(2013)]{wu2013mathematical}
Wu, J., F.~Yan, and A.~Abbas-Turki. 2013. ``Mathematical proof of effectiveness
  of platoon-based traffic control at intersections.'' In \emph{Intelligent
  Transportation Systems-(ITSC), 2013 16th International IEEE Conference on},
  720--725. IEEE.

\end{thebibliography}

\end{document}